\newtheorem{defn0}{Definition}[section]
\newtheorem{prop0}[defn0]{Proposition}
\newtheorem{thm0}[defn0]{Theorem}
\newtheorem{lemma0}[defn0]{Lemma}
\newtheorem{corollary0}[defn0]{Corollary}
\newtheorem{example0}[defn0]{Example}
\newtheorem{conjecture0}[defn0]{Conjecture}
\newtheorem{notation0}[defn0]{Notation}
\newtheorem{remark0}[defn0]{Remark}
\newtheorem{assumption0}[defn0]{$d$-claw tree hypothesis}
\newtheorem{problem0}[defn0]{Problem}
\newenvironment{defi}{\begin{defn0} \rm}{\end{defn0}}
\newenvironment{prop}{\begin{prop0}}{\end{prop0}}
\newenvironment{thm}{\begin{thm0}}{\end{thm0}}
\newenvironment{lema}{\begin{lemma0}}{\end{lemma0}}
\newenvironment{cor}{\begin{corollary0}}{\end{corollary0}}
\newenvironment{example}{\begin{example0} \rm}{\end{example0}}
\newenvironment{rk}{\begin{remark0} \rm}{\end{remark0}}
\newenvironment{conjecture}{\begin{conjecture0}\rm}{\end{conjecture0}}
\newenvironment{notation}{\begin{notation0} \rm}{\end{notation0}}
\newenvironment{pb}{\begin{problem0} \rm}{\end{problem0}}
\DeclareMathOperator*{\argmin}{arg\,min}
\DeclareMathOperator{\proj}{proj}
\newcommand{\RR}{\mathbb{R}}
\newcommand{\der}[2]{\partial_{#2} (#1)}
\newcommand{\dder}[3]{\partial_{#2} \, #1 (#3)}
\newcommand{\xstar}{\mathbf{x}^*}
\newcommand{\mstar}{\mathrm{m}(k)}
\newcommand{\CC}{\mathbb{C}}
\newcommand{\TT}{\mathcal{T}}
\newcommand{\VV}{\mathcal{V}}
\newcommand{\gkm}{\gamma(k,m)}
\newcommand{\uu}{\mathbf{u}}
\newcommand{\Hess}{\mathbf{H}}
\newcommand{\place}[2]{\overset{\substack{#1\\\smile}}{#2}}
\newcommand{\placedown}[2]{\underset{\substack{\frown\vspace{-0.07cm}\\ #1}}{#2}}
\newsavebox\myboxA
\newsavebox\myboxB
\newlength\mylenA
\newcommand*\overbar[2][0.75]{%
	\sbox{\myboxA}{$\m@th#2$}%
	\setbox\myboxB\null
	\ht\myboxB=\ht\myboxA%
	\dp\myboxB=\dp\myboxA%
	\wd\myboxB=#1\wd\myboxA
	\sbox\myboxB{$\m@th\overline{\copy\myboxB}$}
	\setlength\mylenA{\the\wd\myboxA}
	\addtolength\mylenA{-\the\wd\myboxB}%
	\ifdim\wd\myboxB<\wd\myboxA%
	\rlap{\hskip 0.5\mylenA\usebox\myboxB}{\usebox\myboxA}%
	\else
	\hskip -0.5\mylenA\rlap{\usebox\myboxA}{\hskip 0.5\mylenA\usebox\myboxB}%
	\fi}
\begin{document}
	\title{Distance to the stochastic part of phylogenetic varieties}
	\author{Marta Casanellas}
	\author{Jes\'us Fern\'andez-S\'anchez}
	\author{Marina Garrote-L\'opez}

\begin{abstract}
Modelling the substitution of nucleotides along a phylogenetic tree is
usually done by a hidden Markov process. This allows to define a distribution
of characters at the leaves of the trees and one might be able to obtain
polynomial relationships among the probabilities of different characters.
The study of these polynomials and the geometry of the algebraic varieties
defined by them can be used to reconstruct phylogenetic trees. However,
not all points in these algebraic varieties have biological sense. In this
paper, we explore the extent to which adding semi-algebraic conditions
arising from the restriction to parameters with statistical meaning can
improve existing methods of phylogenetic reconstruction. To this end, our
aim is to compute the distance of data points to algebraic varieties and
to the stochastic part of these varieties. Computing these distances involves
optimization by nonlinear programming algorithms. We use analytical methods
to find some of these distances for quartet trees evolving under the Kimura
3-parameter or the Jukes-Cantor models. Numerical algebraic geometry and
computational algebra play also a fundamental role in this paper.
\end{abstract}

\maketitle

\noindent \footnotesize
\emph{Keywords}. Phylogenetic variety; Euclidean distance degree; semi-algebraic phylogenetics; group-based models; quartet topology; long-branch attraction

\normalsize

\section{Introduction}

Within the new century, algebraic tools have started to be successfully
applied to some problems of phylogenetic reconstruction, see for example
\citet{allmandegnanrhodes2013}, \citet{chifmankubatko2015} and \citet{allmankubatkorhodes}.
The main goal of phylogenetic reconstruction is to estimate the
\emph{phylogenetic tree} that best explains the evolution of living species
using solely information of their genome. To this end, one usually considers
evolutionary models of molecular substitution and assume that DNA sequences
evolve according to these models by a Markov process on a tree. Some of
the most used models are \emph{nucleotide substitution models} (e.g. \cite{kimura1981} or \cite{JC69} models), which
are specified by a $4\times 4$ transition matrix associated to each edge
of the tree and a distribution of nucleotides at the root. Then, the distribution
of possible nucleotide sequences at the leaves of the tree (representing
the living species) can be computed as an algebraic expression in terms
of the parameters of the model (the entries of the substitution matrices
and the distribution at the root). This allows the use of algebraic tools
for phylogenetic reconstruction purposes.

When reconstructing the \emph{tree topology} (i.e., the shape of the tree
taking into account the names of the species at the leaves), the main tools
that have been used come either from rank conditions on matrices arising
from a certain rearrangement of the distribution of nucleotides at the
leaves \citep{SVDquartets,chifmankubatko2015,casfer2016}, or from phylogenetic
invariants \citep{lake1987,casanellas2007}. These tools use the fact that
the set of possible distributions satisfies certain \emph{algebraic} constraints,
but do not specifically use the condition that one is dealing with discrete
\emph{distributions} that arise from \emph{stochastic} matrices at the edges
of the tree (i.e. with positive entries and rows summing to one). These
extra conditions lead to \emph{semi-algebraic} constraints which have been
specified for certain models by \cite{AllmanSemialg} (for the general Markov
model), \cite{matsen2009} (for the Kimura 3-parameter model) and by
\cite{zwierniksmith} and \cite{klaere2012} for the 2-state case ($2\times 2$ transition
matrices). Combining algebraic and semi-algebraic conditions to develop
a tool for reconstructing the tree topology is not an easy task and, as
far as we are aware, both tools have only been used together in
\cite{Kosta2019} for the simple case of 2 states.

As a starting point of topology reconstruction problems, it is natural
to use trees on four species (called 1, 2, 3, 4 for example). In this case,
there are three possible (unrooted and fully resolved) phylogenetic trees,
$13|24$, $13|24$, and $14|23$ (see Fig.~\ref{Fig:3_top}). Then a distribution
of nucleotides for this set of species is a vector
$P\in \mathbb{R}^{4^{4}}$ whose entries are non-negative and sum to one.
The set of distributions arising from a Markov process on any of these
trees $T$ (for a given substitution model) defines an algebraic variety
$\mathcal{V}_{T}$ (see Section~\ref{sec:PhyloVars}). The three
\emph{phylogenetic varieties} $\mathcal{V}_{12|34}$,
$\mathcal{V}_{13|24}$, $\mathcal{V}_{14|23}$ are different and the topology
reconstruction problem for a given distribution
$P\in \mathbb{R}^{4^{4}}$ is, briefly, deciding to which of these three
varieties $P$ is closest (for a certain distance or for another specified
optimization problem such as likelihood estimation). The algebraic tools
related to rank conditions mentioned above attempt to estimate these Euclidean
distances, for example.

\begin{figure}
\footnotesize 
\includegraphics[scale=0.3]{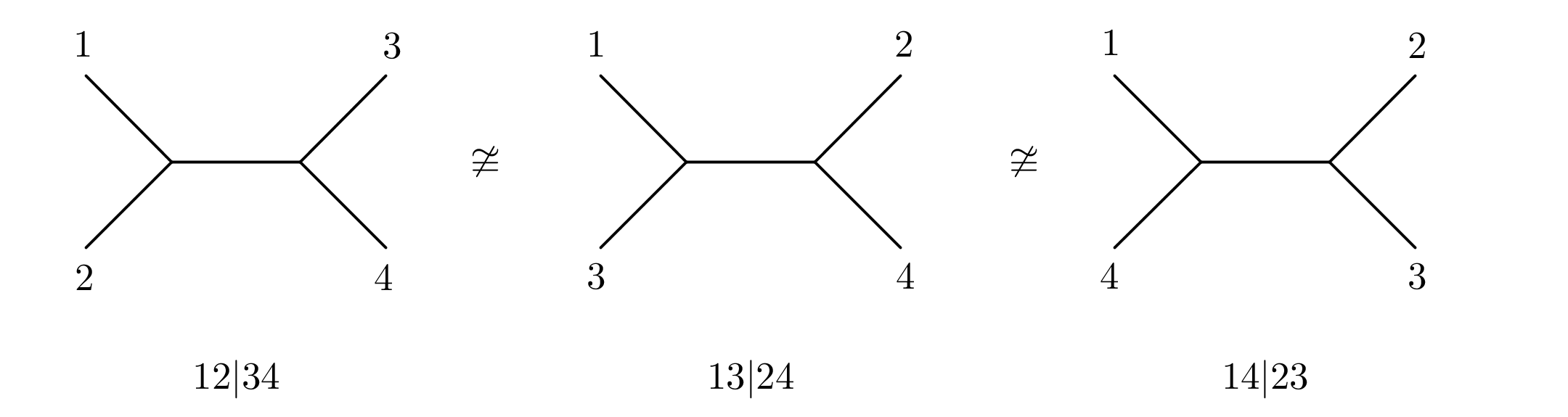}
\caption{The three unrooted (fully resolved) phylogenetic trees on $4$ leaves:
$12|34$ (left), $13|24$ (middle) and $14|23$ (right).}
\label{Fig:3_top}
\end{figure}

If we assume that $P$ should be close to a distribution that has arisen
from stochastic parameters on one of these trees, then one should consider
only the \emph{stochastic part} of these varieties,
$\mathcal{V}_{12|34}^{+}$, $\mathcal{V}_{13|24}^{+}$,
$\mathcal{V}_{14|23}^{+}$ (which we call the
\emph{stochastic phylogenetic regions}). The main questions that motivated
the study presented here are:
\begin{itemize}
\item[-]
\emph{Could semi-algebraic tools add some insight to the already existent
algebraic tools?}
\item[-]
\emph{Do semi-algebraic conditions support the same tree $T$ whose algebraic
variety $\mathcal{V}_{T}$ is closest to the data point?}
\end{itemize}

In terms of the Euclidean distance and trees of four species, we make the
explicit following question:
\begin{itemize}
\item[(*)]
\emph{If $P\in \mathbb{R}^{4^{4}}$ is a distribution satisfying
$d(P,\mathcal{V}_{12|34}) <min\{d(P,\mathcal{V}_{13|24}),d(P,
\mathcal{V}_{14|23})\}$, would it be possible that
$d(P,\mathcal{V}_{12|34}^{+}) > \min \{d(P,\mathcal{V}_{13|24}^{+}),d(P,
\mathcal{V}_{14|23}^{+})\}$?}
\end{itemize}

We address this problem for special cases of interest in phylogenetics:
short branches at the external edges (see section~\ref{sec4}) and long branch attraction
(in section~\ref{sec:lba}). The length of a branch in a phylogenetic tree is understood
as the expected number of substitutions of nucleotides per site along the
corresponding edge; both cases, short and long branches, usually lead to
confusing results in phylogenetic reconstruction (particularly in relation
to the long branch attraction problem, see section~\ref{sec:lba}). In the first case
we are able to deal with the Kimura 3-parameter model and in the second
case we have to restrict to the more simple Jukes-Cantor (JC69) model.
The reason for this restriction is that the computations get more involved
in the second case and we have to use computational algebra techniques
(for which is crucial to decrease the number of variables of the problem).
To this end, in section~\ref{sec5} we introduce an algorithm that computes the distance
of a point to the stochastic phylogenetic regions in the JC69 case; this
algorithm makes explicit use of the Euclidean distance degree
\citep{draisma2015} of the phylogenetic varieties.

We find that in the first framework (short external branches), restricting
to the stochastic part does not make any difference, that is, Question
1 has a negative answer in this case (see Theorem~\ref{thm:seb}). However,
in the long branch attraction framework, considering the stochastic part
of phylogenetic varieties might be of interest, specially if the data points
are close to the intersection of the three varieties, see Theorem~\ref{thm_lba}.
In particular, the answer to Question 1 is positive for
data close to the long branch attraction problem under the JC69 model.
In section~\ref{sec:simulations} we provide results on simulated data that support these findings
and also show a positive answer to Question 1 for balanced trees.

Summing up, incorporating the semi-algebraic conditions to the problem
of phylogenetic reconstruction seems important when the data are close
to the intersection of the three phylogenetic varieties. This is the case
where phylogenetic reconstruction methods tend to confuse the trees. On
the contrary, on data points which are far from the intersection (in the
short branches case of section~\ref{sec4} for example), it does not seem necessary
to incorporate these semi-algebraic tools. This is the reason why incorporating
these tools into phylogenetic reconstruction methods might be extremely
difficult.

In this paper we consider only the Euclidean distance. One reason to do
so is that the initial algebraic tools based on rank conditions were dealing
with it, but another motivation is that the algebraic expression of the
Euclidean distance permits the use of algebraic tools to derive analytical
results and the use of numerical algebraic geometry to get global minima.
On the other hand, the use of other measures such as Hellinger distance
or maximum likelihood, would not allow the use of the Fourier transform
for the evolutionary models we use here, which significantly simplifies
the computations in our case.

The organization of the paper is as follows. In section \ref{sec:Prel}, we introduce
the concepts on nucleotide substitution models and phylogenetic varieties
that we will use later on. Then in section~\ref{sec:matrix} we prove some technical results
regarding the closest stochastic matrix to a given matrix. In section~\ref{sec4}
we consider the case of short external branches for the Kimura 3-parameter
model and obtain the results analytically. In section~\ref{sec5} we introduce the
computational approach that we use in order to compute the distance to
the stochastic phylogenetic regions. The results for the long branch attraction
case are expanded in section \ref{sec:lba} and in section~\ref{sec:simulations} we provide results on simulated
data that illustrate our findings. The Appendix collects all technical
proofs needed in section \ref{sec:lba}.

\section{Preliminaries}\label{sec:Prel}

\subsection{Phylogenetic varieties}
\label{sec:PhyloVars}

We refer the reader to the work by \cite{AllmanRhodeschapter4} for a good general overview of phylogenetic algebraic
geometry. Here we briefly introduce the basic concepts that will be needed
later. Let $T$ be a \emph{quartet} tree topology, that is, an (unrooted)
trivalent phylogenetic tree with its leaves labelled by
$\{1,2,3,4\}$ (i.e. $T$ is a connected acyclic graph whose interior nodes
have degree $3$ and whose leaves, of degree 1, are in correspondence with
$\{1,2,3,4\}$), see Fig.~\ref{Fig:3_top}. Using the notation introduced
in Fig.~\ref{Fig:3_top}, $T$ belongs to the set
$\mathcal{T}=\{12|34, 13|24, 14|23\}$. We choose an internal vertex as
the root $r$ of $T$, which induces an orientation on the set of edges
$E(T)$. Suppose the Markovian evolutionary process on that tree follows
a nucleotide substitution model $\mathcal{M}$: associate a random variable
taking values on $\Sigma :=\{\mathtt{A},\mathtt{C},\mathtt{G},\mathtt{T }\}$ at each node of
the tree, and consider as parameters a distribution
$\pi = (\pi _{\mathtt{A}}, \pi _{\mathtt{C}}, \pi _{\mathtt{G}}, \pi _{\mathtt{T}})$ at the
root, $\sum _{i}\pi _{i} = 1$, and a $4\times 4$ transition matrix
$M_{e}$ at each (oriented) edge $e$ of $T$. The transition matrices are
\textit{stochastic} (or \textit{Markov}) matrices, that is, all its entries
are non-negative and its rows sum up to $1$. A vector is
\textit{stochastic} if all its entries are nonnegative and sum up to
$1$.

If $T\in \mathcal{T}$ and $S$ is the set of stochastic parameters described
above, we denote by $\psi _{T}$ the following (parametrization) map:
\begin{align*}
\psi _{T}: S\subset [0,1]^{\ell } &\rightarrow \mathbb{R}^{4^{4}}
\\
\{\pi ,\{M_{e}\}_{e\in E(T)}\} &\mapsto P=(p_{\mathtt{AAAA}}, p_{\mathtt{AAAC}},
\ldots , p_{\mathtt{TTTG}}, p_{\mathtt{TTTT}})
\end{align*}
which maps each set of parameters of the model
$\{\pi ,\{M_{e}\}_{e\in E(T)}\}\in S$ to the joint distribution of characters
at the leaves of $T$ given by a hidden Markov process on $T$ governed by
these parameters. The entries $p_{x_{1},\ldots ,x_{4}}$ of the joint distribution
can be expressed in terms of the entries of the substitution matrices.
We adopt the following notation: trees are rooted at the interior node
neighbour to leaf 1, for $i=1,\ldots , 4$, $M_{i}$ is the transition matrix
on the edge ending at leaf $i$, and $M_{5}$ is the transition matrix at
the interior edge. For example, for the tree $12|34$ rooted at the leftmost
internal edge with transition matrices as in Fig.~\ref{Fig:tree} we have
\begin{equation*}
p_{x_{1},x_{2},x_{3},x_{4}}=\sum _{x_{r},x_{s}\in \Sigma }\pi _{x_{r}} M_{1}(x_{r},x_{1})M_{2}(x_{r},x_{2})M_{5}(x_{r},x_{s})M_{3}(x_{s},x_{3})M_{4}(x_{s},x_{5}).
\end{equation*}

\begin{figure}
\includegraphics[scale=0.5]{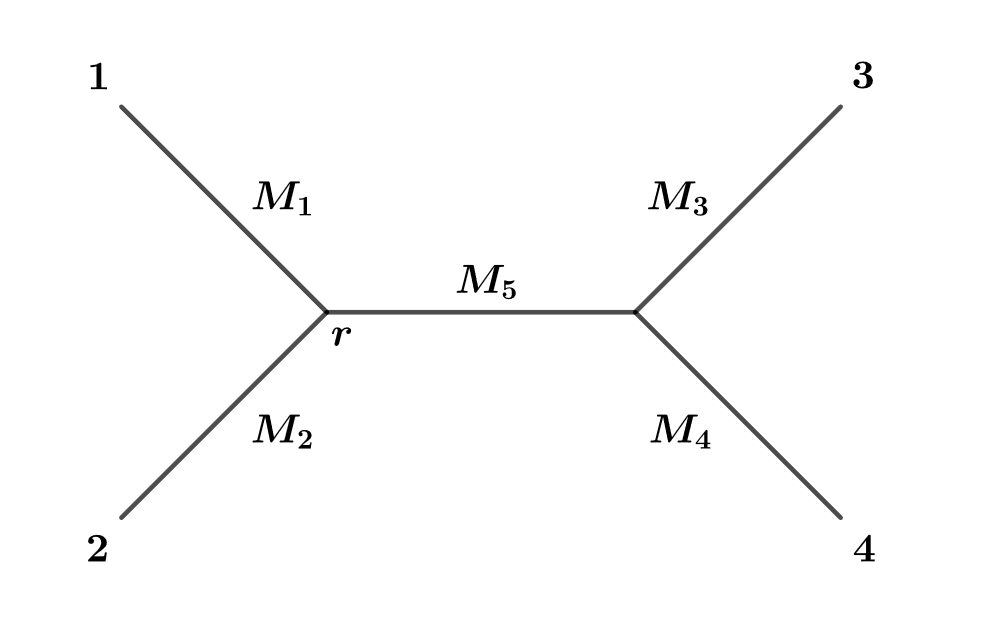}
\caption{Tree $12|34$ with transition matrices $M_{1}$, $M_{2}$, $M_{3}$,
$M_{4}$ and $M_{5}$.}
\label{Fig:tree}
\end{figure}

We write $\mathcal{V}_{T}^{+}$ for the image of this map, that is, the
space of all the distributions arising from stochastic parameters,
\begin{equation*}
\mathcal{V}_{T}^{+}=\{P \in \mathcal{V}_{T}\ |\ P = \psi _{T}(s)
\mbox{ and } s\in S\}.
\end{equation*}
We call this set the stochastic phylogenetic region.

Since $\psi _{T}$ is a polynomial map, it can be extended to
$\mathbb{R}^{\ell }$. That is, we can consider not only nonnegative entries
in $\pi $ and $M_{e}$, but we always assume that the rows of the matrices
$M_{e}$ and the vector $\pi $, sum up to $1$. Define the
\textit{phylogenetic variety} associated with $\mathcal{T}$ as the Zariski
closure of $\psi _{T}(\mathbb{R}^{l})$,
\begin{equation*}
\mathcal{V}_{T} = \overline{\psi _{T}(\mathbb{R}^{l})}.
\end{equation*}

This variety contains all joint distributions that arise from stochastic
parameters on the tree $T$, but not every point in this variety is of this
type.

Although the choice of a root was necessary to define the map
$\psi _{T}$, the phylogenetic variety and the stochastic region do not
depend on it \citep{allman2003}. Throughout the paper we consider the Euclidean
distance between points, even if we do not specify it.

\subsection{Kimura and Jukes-Cantor models}
\label{sec2.2}

In this paper we focus on phylogenetic $4$-leaf trees evolving under the
\textit{Jukes-Cantor} model ($JC69$ for short, see \cite{JC69}) and also the
\textit{$3$-parameter Kimura} model ($K81$ for short, see \cite{kimura1981}).
The JC69 model is a highly structured model that assumes equal mutation
probabilities and the K81 takes into account the classification of nucleotides
as purines/pyrimidines and the probabilities of substitution between and
within these groups; both models assume the uniform distribution at the
root, $\pi = (\frac{1}{4}, \frac{1}{4}, \frac{1}{4}, \frac{1}{4})$.

\begin{defi}
A $4\times 4$ matrix $M$ is a \emph{K81 matrix} if it is of the form
%
\begin{equation}
\label{eq:K81}
\footnotesize
M = \left (
\begin{array}{c@{\quad }c@{\quad }c@{\quad }c}
a & b & c & d
\\
b & a & d & c
\\
c & d & a & b
\\
d & c & b & a
\\
\end{array}
\right ),
\end{equation}
for some $a,b,c,d\in \mathbb{R}$ summing to 1, $a+b+c+d = 1$. If
$b=c=d$, then we say that $M$ is a \emph{JC69 matrix}.
\end{defi}

Note that these matrices only have an interpretation as transition matrices
of a Markov process if they have nonnegative entries; in this case we talk
about \emph{stochastic K81 matrices} or
\emph{stochastic JC69 matrices}.

\begin{lema}%
\label{lema:eigenK81}%
\emph{\citep{ARbook}} If $M$ is a $K81$ matrix as \eqref{eq:K81}, then
it diagonalizes with eigenvalues $m_{\mathtt{A}} = a+b+c+d=1$,
$m_{\mathtt{C}} = a+b-c-d$, $m_{\mathtt{G}} = a-b+c-d$ and
$m_{\mathtt{T}} = a-b-c+d$ and respective eigenvectors
$\bar{\mathtt{A}}=(1,1,1,1)^{t}$, $\bar{\mathtt{C}}=(1,1,-1,-1)^{t}$,
$\bar{\mathtt{G}}=(1,-1,1,-1)^{t}$ and $\bar{\mathtt{T}}=(1,-1,-1,1)^{t}$. In particular,
the eigenvalues of a $JC69$ matrix are $m_{\mathtt{A}} = 1$ and
$m_{\mathtt{C}} = m_{\mathtt{G}} = m_{\mathtt{T}} = 1-4b$.
\end{lema}

\subsection{Fourier coordinates and Fourier parameters}
\label{sec2.3}

Let $M$ be a $K81$ matrix and write
$m_{\mathtt{A}}, m_{\mathtt{C}}, m_{\mathtt{G}}, m_{\mathtt{T}}$ and
$\bar{\mathtt{A}}, \bar{\mathtt{C}}, \bar{\mathtt{G}}, \bar{\mathtt{T}}$ for the eigenvalues
and eigenvectors of $M$, respectively. The basis of eigenvectors will be
denoted by
$\overbar {\Sigma }=\{\bar{\mathtt{A}}, \bar{\mathtt{C}}, \bar{\mathtt{G}},
\bar{\mathtt{T}}\}$ and is called the \emph{Fourier basis}. Because of Lemma~\ref{lema:eigenK81}, we have
\begin{eqnarray*}
\overbar {M} = H^{-1}\cdot M \cdot H,
\end{eqnarray*}
where
$\overbar {M} = diag(m_{\mathtt{A}}, m_{\mathtt{C}}, m_{\mathtt{G}}, m_{\mathtt{T}})$ and
\begin{equation*}
\footnotesize
H=\left (
\begin{array}{c@{\quad }c@{\quad }c@{\quad }c}
1 & 1 & 1 & 1
\\
1 & 1 & -1 & -1
\\
1 & -1 & 1 & -1
\\
1 & -1 & -1 & 1
\\
\end{array}
\right )
\end{equation*}
is the matrix of change of basis from $\overbar {\Sigma }$ to
$\Sigma $. Notice that $H^{-1} = \frac{1}{4}H^{t} = \frac{1}{4}H$. The
eigenvalues
$m^{i}_{\mathtt{A}}, m^{i}_{\mathtt{C}}, m^{i}_{\mathtt{G}}, m^{i}_{\mathtt{T}}$ of
$M_{i}$ will be called \emph{Fourier parameters}.

The vectors
$P = (p_{\mathtt{AAAA}}, p_{\mathtt{AAAC}}, \ldots , p_{\mathtt{TTTG}}, p_{\mathtt{TTTT}})
\in \mathbb{R}^{4^{4}}$ considered in section~\ref{sec:PhyloVars} can be thought of as
$4\times 4\times 4\times 4$ tensors in
$\left (\mathbb{R}^{4}\right )^{\otimes 4}$: if we call
$\Sigma =\{\mathtt{A},\mathtt{C},\mathtt{G},\mathtt{T}\}$ the standard basis of $\mathbb{R}^{4}$, then
the components $p_{x_{1}x_{2}x_{3}x_{4}}$ of $P$ are its coordinates in
the natural basis in $\otimes ^{4}\mathbb{R}^{4}$ induced by
$\Sigma $. This motivates the following definition.

\begin{defi}
Given a tensor $P$ in $\left (\mathbb{R}^{4}\right )^{\otimes 4}$, we
denote by $(p_{\mathtt{AAAA}}, p_{\mathtt{AAAC}}, \ldots , p_{\mathtt{TTTT}})^{t}$ the
coordinates of $P$ in the basis
$\{\mathtt{A\otimes A\otimes A\otimes A},\mathtt{A\otimes A\otimes A\otimes C},\mathtt{
\ldots },\mathtt{ T\otimes T\otimes T\otimes T}\}$ induced by $\Sigma $. Similarly,
we write
$\overbar {P}=(\bar{p}_{\mathtt{AAAA}}, \bar{p}_{\mathtt{AAAC}}, \ldots ,
\bar{p}_{\mathtt{TTTG}}, \bar{p}_{\mathtt{TTTT}})^{t}$ for the coordinates of
$P$ in the basis
$\{\mathtt{\bar{A}\otimes \bar{A}\otimes \bar{A}\otimes \bar{A}},\mathtt{ \ldots },\mathtt{
\bar{T}\otimes \bar{T}\otimes \bar{T}\otimes \bar{T}}\}$ induced by the
Fourier basis $\overbar {\Sigma }$.
\end{defi}

The relation between the natural coordinates and the Fourier coordinates
of $P$ is:
\begin{eqnarray*}
\overbar {P} = \left (H^{-1}\otimes H^{-1}\otimes H^{-1}\otimes H^{-1}
\right )P = \frac{1}{4^{4}}\left (H\otimes H\otimes H\otimes H\right )P.
\end{eqnarray*}

\begin{rk}%
\label{rk:FourierOrthogonal}
Since $\frac{1}{2}H$ is an orthogonal matrix, so is
$U := \left (\frac{1}{2}H\right )\otimes \left (\frac{1}{2}H\right )
\otimes \left (\frac{1}{2}H\right )\otimes \left (\frac{1}{2}H\right )$.
Therefore,
\begin{align*}
\parallel {\overbar {P} - \overbar {Q}}\parallel ^{2} &= \parallel {
\frac{1}{2^{4}}U{P} - \frac{1}{2^{4}}U{Q}}\parallel ^{2} =
\frac{1}{4^{4}} \parallel P-Q\parallel ^{2}
\end{align*}
and the Euclidean distance between tensors can be computed using the Fourier
coordinates (up to a positive scalar):
$d({P}, {Q}) = 16 ||\overbar {P}-\overbar {Q}||$.
\end{rk}

If one considers the following bijection between $\Sigma $ and the group
$G:(\mathbb{Z}/2\mathbb{Z}\times \mathbb{Z}/2\mathbb{Z},+)$,
\begin{equation*}
\begin{array}{ccc}
\Sigma =\{\mathtt{A},\mathtt{C},\mathtt{G},\mathtt{T}\}& \longleftrightarrow & \mathbb{Z}/2
\mathbb{Z}\times \mathbb{Z}/2\mathbb{Z}
\\
\mathtt{A} & \mapsto & (0,0)
\\
\mathtt{C} & \mapsto & (0,1)
\\
\mathtt{G} & \mapsto & (1,0)
\\
\mathtt{T} & \mapsto & (1,1)
\\
\end{array}
,
\end{equation*}
then the previous change of coordinates can be understood as the discrete
Fourier transform on $G^{4}$. The following result states that the polynomial
parametrization $\psi _{T}$ becomes monomial in the Fourier parameters:

\begin{thm}%
\label{thm_fourier}
\emph{\citep{evans1993}} Let
$P=\psi _{T}(\pi ,\{M_{i}\}_{i\in [5]})$ where $T$ is the tree topology
$A|B$ and $M_{i}$ are $K81$ matrices. If
$m_{\mathtt{A}}^{i}, m_{\mathtt{C}}^{i}, m_{\mathtt{G}}^{i}, m_{\mathtt{T}}^{i}$ are the Fourier
parameters of $M_{i}$, then the Fourier coordinates of $P$ are
\begin{equation*}%
\bar{p}_{x_{1}x_{2}x_{3}x_{4}} =
\begin{cases}
\frac{1}{4^{4}}m_{\mathtt{x_{1}}}^{1}m_{\mathtt{x_{2}}}^{2}m_{\mathtt{x_{3}}}^{3}m_{
\mathtt{x_{4}}}^{4}m_{\sum _{i\in A}\mathtt{x_{i}}}^{5} &\text{if
$\sum _{i\in A}{\mathtt{x_{i}}} = \sum _{j\in B} {\mathtt{x_{j}}}$,}
\\
0 &\text{otherwise,}
\end{cases}
\end{equation*}
where the sum of elements in $\Sigma $ is given by the bijection
$\Sigma \leftrightarrow \mathbb{Z}/2\mathbb{Z}\times \mathbb{Z}/2
\mathbb{Z}$ introduced above.
\end{thm}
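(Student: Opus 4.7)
The plan is to prove the theorem by a direct computation in Fourier coordinates, exploiting the fact that the rows of $H$ are precisely the characters of the group $G = \mathbb{Z}/2\mathbb{Z} \times \mathbb{Z}/2\mathbb{Z}$ under the bijection with $\Sigma$ introduced just before the statement. The strategy is: (i) substitute the spectral decomposition of each K81 matrix into the parametrization; (ii) apply the Fourier transform $(H^{-1})^{\otimes 4}$; (iii) use orthogonality of characters to see that the sums over the interior nodes collapse to Kronecker deltas that produce both the vanishing condition $\sum_{i\in A}\mathtt{x_i} = \sum_{j\in B}\mathtt{x_j}$ and the monomial shape.

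I would work with the tree $T = 12|34$ (the other topologies are identical up to relabelling). By Lemma~\ref{lema:eigenK81}, $M_i = H \overbar{M}_i H^{-1}$ with $\overbar{M}_i$ diagonal, so
\[
M_i(a,b) \;=\; \tfrac{1}{4}\sum_{y_i \in \Sigma} m_{\mathtt{y_i}}^i \, H(a,y_i)\, H(y_i,b),
\]
using $H^{-1} = \tfrac{1}{4}H$ and the symmetry of $H$. Plugging this into the parametrization of $p_{x_1 x_2 x_3 x_4}$ displayed before Fig.~\ref{Fig:tree}, and then applying the componentwise Fourier transform $\overbar{p}_{y_1 y_2 y_3 y_4} = \tfrac{1}{4^4}\sum_{x_1,\dots,x_4} H(y_1,x_1)\cdots H(y_4,x_4)\, p_{x_1 x_2 x_3 x_4}$, the sums over the leaf indices $x_1,\dots,x_4$ can be carried out first: for each leaf $i$ the orthogonality relation $\sum_{x_i} H(y_i,x_i)H(y_i',x_i) = 4\,\delta_{y_i,y_i'}$ forces the dummy Fourier index coming from $M_i$ to equal $y_i$. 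After this, only one Fourier parameter $m^i_{\mathtt{y_i}}$ remains per leaf edge.

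What is left is the contribution from the interior nodes $x_r, x_s$, together with $\pi$ (uniform) and the interior matrix $M_5$ whose own spectral decomposition introduces a Fourier index $y_5$. The key identity is the character property $H(x,y)H(x,y') = H(x, y+y')$, where $+$ is the group law on $G$; this holds because the rows of $H$ are exactly the four characters of $G$. Using it together with $\pi_{x_r} = 1/4$ gives
\[
\sum_{x_r} \pi_{x_r} H(x_r,y_1)H(x_r,y_2)H(x_r,y_5) \;=\; \tfrac{1}{4}\sum_{x_r} H(x_r, y_1+y_2+y_5) \;=\; \delta_{y_1+y_2+y_5,\, 0},
\]
and analogously the sum over $x_s$ produces $\delta_{y_3+y_4+y_5,\,0}$. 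These two deltas together are equivalent to $y_5 = y_1+y_2 = y_3+y_4$ in $G$, which is exactly the condition $\sum_{i\in A}\mathtt{x_i} = \sum_{j\in B}\mathtt{x_j}$ when $A = \{1,2\}$, together with the prescription $\mathtt{y_5} = \sum_{i\in A}\mathtt{x_i}$ that appears in the statement. Collecting the surviving monomial $m_{\mathtt{y_1}}^1 m_{\mathtt{y_2}}^2 m_{\mathtt{y_3}}^3 m_{\mathtt{y_4}}^4 m_{\mathtt{y_5}}^5$ together with the scalar factors gives precisely the claimed formula.

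The main obstacle is purely bookkeeping: tracking the powers of $1/4$ coming from $H^{-1}=\tfrac{1}{4}H$, from the uniform $\pi$, and from the overall prefactor of the Fourier transform, and ensuring they combine to the stated $1/4^4$. The conceptual content reduces to the two facts that (a) K81 matrices are simultaneously diagonalized by the character table $H$ of $G$, and (b) characters are orthogonal and multiplicative, which converts the convolution along the interior edge into the split-compatibility condition on the Fourier indices. The argument extends with no change to the other two topologies in $\mathcal{T}$ and, in fact, to any trivalent tree, with one delta per internal vertex enforcing that the Fourier indices on the incident edges sum to zero.
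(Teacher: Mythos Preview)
Your proof is correct: the character-theory computation you outline is exactly the standard argument, and the bookkeeping of the $1/4$ factors works out to the stated $1/4^{4}$ as you claim. Note, however, that the paper does not actually prove this theorem; it is stated with a citation to \cite{evans1993} and used as background, so there is no ``paper's own proof'' to compare against---your argument is essentially the original Evans--Speed proof specialized to the quartet case.
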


\begin{notation}%
\label{nota_param}
From now on, for the JC69 model, we denote by $x_{i}$ the eigenvalue of
$M_{i}$ of multiplicity three different from 1 (see Lemma~\ref{lema:eigenK81}) and we denote by $\varphi _{T}$ the parameterization
of the phylogenetic varieties from Fourier parameters to Fourier coordinates,
\begin{equation*}
\begin{array}{rcl}
\varphi _{T}: \mathbb{R}^{5} & \longrightarrow & \mathbb{R}^{4^{4}}
\\
\mathbf{x}=(x_{1},x_{2},x_{3},x_{4},x_{5}) & \mapsto &\overbar {P}=
\varphi _{T}(x_{1},x_{2},x_{3},x_{4},x_{5}).
\end{array}
\end{equation*}
The parametrization $\varphi _{T}$ can be computed by adapting Theorem~\ref{thm_fourier} to this model.

\end{notation}

\section{The closest stochastic matrix}\label{sec:matrix}

Throughout this section, we will use the following notation. We write
$\mathcal{H}$ for the hyperplane
$\{x_{1}+\ldots +x_{N} = 1\} \subset \mathbb{R}^{N}$ and
$\Delta :=\{(x_{1},\ldots ,x_{N}) \mid \sum _{i} x_{i}=1, x_{i}\geq 0
\}$ for the standard simplex in $\mathbb{R}^{N}$. Given a point
$\mathbf{x}\in \mathbb{R}^{N}$, we denote by
$\proj _{\mathcal{H}}(\mathbf{x})$ its orthogonal projection onto
$\mathcal{H}$.

\begin{defi}
For any matrix $M\in \mathcal{M}_{N}(\mathbb{R})$ we denote by
$\widehat{M}$ its closest stochastic matrix in the Frobenious norm:
\begin{equation*}
\label{stoc_mat}
\widehat{M} = \argmin _{
\substack{\sum _{j} X_{ij} = 1\ \forall i,\\ X_{ij} \geq 0\ \forall (i,j)}}
\parallel M-X\parallel _{F}.
\end{equation*}
Similarly, for any point $\mathbf{x}\in \mathbb{R}^{N}$ we write
$\widehat{\mathbf{x}}$ for its closest point in $\Delta $.
\end{defi}

The problem of finding the nearest stochastic matrix is equivalent to finding
the closest point (in Euclidean norm) in the standard simplex to every
row of the matrix \cite{Simplex}. The uniqueness of $\widehat{v}$, and
consequently of $\widehat{M}$, is guaranteed since both the objective function
and the domain set are convex. The problem of finding the closest point
in the simplex $\Delta $ to a given point has been widely studied and there
exist several algorithms to compute it. We refer the reader to the work by \cite{Michelot:simplex} for an algorithm that, given any point
$\mathbf{x}\in \mathbb{R}^{N}$, produces the point
$\widehat{\mathbf{x}}\in \Delta $ that minimizes
$\parallel \mathbf{x}-\mathbf{y} \parallel _{2}$ for
$\mathbf{y}\in \Delta $.

\begin{figure}
\includegraphics[scale=0.3]{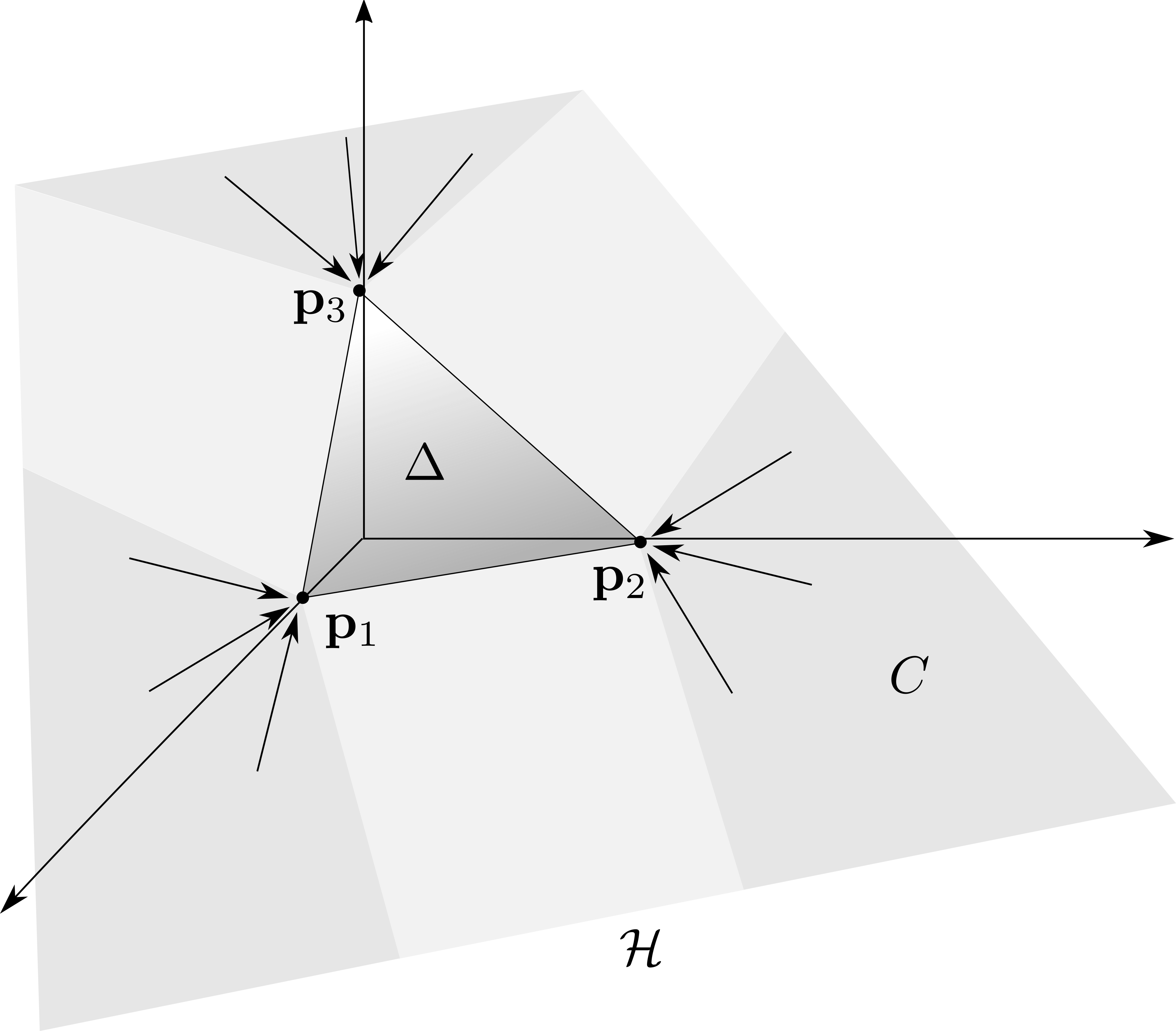}
\caption{\footnotesize{The hyperplane $\mathcal{H}$ and the standard simplex $\Delta $ are
represented in the case $N=3$. The 2-dimensional cone $C$ given by the
inequalities $x_{1}-x_{2}\geq 1$, $x_{1}-x_{3}\geq 1$ corresponds to all the
points in $\mathcal{H}$ whose projection on the simplex is $\mathbf{p}_{1}$ (see
(v) in Lemma~\ref{props_projection}).}}
\label{lba}
\end{figure}

\bigskip
In the following result we state some properties that will be useful later
(see Fig.~\ref{lba} for an illustration of the last item).

\begin{lema}%
\label{props_projection}
Let $\mathbf{x}=(x_{1}, \ldots , x_{N})$ be a point in
$\mathbb{R}^{N}$ and let
$\widehat{\mathbf{x}}=(\widehat{x}_{1}, \ldots , \widehat{x}_{N})$ be its
closest point in $\Delta $.
\begin{enumerate}[(iii)]
\item[\emph{(i)}] $\widehat{\mathbf{x}}$ coincides with the closest point to
$\proj _{\mathcal{H}}(\mathbf{x})$ in $\Delta $,
$\widehat{\proj _{\mathcal{H}}(\mathbf{x}).}$
\item[\emph{(ii)}] If $\mathbf{x}\in \mathcal{H}$ and $x_{i}\leq 0$ for some
$i$, then $\widehat{x}_{i} = 0$.
\item[\emph{(iii)}] Let $\mathbf{y}$ be a point obtained by a permutation of the
coordinates of $\mathbf{x}$, i.e. $\mathbf{y}=P\mathbf{x}$ for some permutation
matrix $P$. Then $\widehat{\mathbf{y}} = P\widehat{\mathbf{x}}$.
\item[\emph{(iv)}] If $x_{i}=x_{j}$ for some $i,j=1,\ldots ,N$ then
$\widehat{x}_{i} = \widehat{x}_{j}$.
\item[\emph{(v)}] $\widehat{\mathbf{x}}$ coincides with
$\mathbf{p}_{i}=(0, \ldots ,\placedown{i}{1},\ldots , 0)$ if and only if
$x_{i} - x_{j} \geq 1$ $\forall j \neq i$.
\end{enumerate}
\end{lema}

\begin{proof}
The proofs of items $(i)$ and $(ii)$ can be found in the paper by \cite{Michelot:simplex} and are the basis of the algorithm provided there.%

\noindent
$(iii)$ It follows from the fact that $P$ is a permutation matrix and hence
is an orthogonal matrix.

\noindent
$(iv)$ This is a direct consequence of $(iii)$.

\noindent
$(v)$ Using $(i)$ and $(ii)$ we can assume that $\sum _{i} x_{i} = 1$,
i.e., $\mathbf{x}$ belongs to the affine hyperplane~$\mathcal{H}$. By
symmetry, it is enough to prove the result for $\mathbf{p}_{1}$, that is,
we prove that $\widehat{\mathbf{x}}=\mathbf{p}_{1}$ if and only if $x_{1}-x_{j}
\geq 1$ for all $j\neq 1$. Firstly we show that if
$x_{1}-x_{j}\geq 1$, $j\neq 1$, then necessarily
$\widehat{\mathbf{x}}=\mathbf{p}_{1}$. Indeed, if
$\mathbf{q}=(c_{1},\ldots ,c_{N})\in \Delta $, we have that
%
\begin{eqnarray}
\label{aux_distances}
d(\mathbf{x},\mathbf{q})^{2} & = & \sum _{j=1}^{N} x_{j}^{2} + \left (
\sum _{j=1}^{N} c_{j}^{2} -2 \sum _{j=1}^{N} c_{j} x_{j} \right )
\\
d(\mathbf{x},\mathbf{p}_{1})^{2} & = & \sum _{j=1}^{N} x_{j}^{2} + (1-2
x_{1}).
\nonumber
\end{eqnarray}
Now, because of the assumption $x_{1}-x_{j}\geq 1$ and
$\sum _{j} c_{j}=1$, we have that
\begin{eqnarray*}
\sum _{j=1}^{N} c_{j} x_{j} \leq \sum _{j=1}^{N} c_{j} x_{1} - \big (
\sum _{j=2}^{N} c_{j} \big ) = x_{1} + (c_{1}-1).
\end{eqnarray*}
In particular,
\begin{eqnarray*}
\sum _{j=1}^{N} c_{j}^{2} -2 \sum _{j=1}^{N} c_{j} x_{j} \geq \sum _{j=1}^{N}
c_{j}^{2} -2 (x_{1}+c_{1}-1) = (c_{1}-1)^{2} + \sum _{j=2}^{N} c_{j}^{2}
+ (1-2\, x_{1}) \geq 1-2\, x_{1}.
\end{eqnarray*}
Comparing this with (\ref{aux_distances}), it follows that
$d(\mathbf{x},q)\geq d(\mathbf{x},\mathbf{p}_{1})$ for any
$q\in \Delta $, so $\mathbf{p}_{1}=\widehat{\mathbf{x}}$. Conversely, assume
that $\mathbf{x}\in \mathcal{H}$ is such that $x_{1}-x_{i}<1$ for some
$i\geq 2$. We will show that there exists some $\mathbf{q}$ in the edge
$\mathbf{p}_{1} \mathbf{p}_{i}$ such that
$d(\mathbf{x},\mathbf{q})<d(\mathbf{x},\mathbf{p}_{1})$ so that
$\mathbf{p}_{1}$ cannot be the closest point to $\mathbf{x}$ in the
simplex $\Delta $. Consider
$\mathbf{q}=a \mathbf{p}_{1} + b \mathbf{p}_{i}$ with $a,b\geq 0$,
$a+b=1$. As above, we have that
\begin{eqnarray*}
d(\mathbf{x},\mathbf{q})^{2} & = & \sum _{j=1}^{N} x_{j}^{2} + a^{2} -2
\,a\,x_{1} +b^{2} -2\,b\,x_{i}.
\end{eqnarray*}

We claim that if we take $0<b<1+x_{i}-x_{1}$, then the point
$\mathbf{q}$ satisfies the inequality between distances above. Indeed,
we need to verify that
\begin{eqnarray*}
a^{2} -2\,a\,x_{1} +b^{2} -2\,b\,x_{i} < 1-2x_{1} = (a+b)^{2} -2 x_{1}
= a^{2}+b^{2} +2ab-2x_{1}.
\end{eqnarray*}

Using that $a=1-b$, this is equivalent to the inequality
$b(b-1+x_{1}-x_{i})<0$, which is satisfied by our choice of $b$.
\end{proof}

\begin{rk}%
\label{example:nonstochasticK81}
If the rows of a matrix $M$ are the result of some permutation applied
to the first row, the previous lemma shows that $\widehat{M}$ will preserve
the same identities between entries as the matrix $M$. Actually, it can
be shown that if $M$ is a matrix in an equivariant model
\citep{Draisma}, then $\widehat{M}$ will remain in the same model. For example,
the matrix
\begin{eqnarray*}
\footnotesize
M = \left (
\begin{array}{c@{\quad }c@{\quad }c@{\quad }c}
0.9 & 0.03 & -0.01 & 0.08
\\
0.03 & 0.9 & 0.08 & -0.01
\\
-0.01 & 0.08 & 0.9 & 0.03
\\
0.08 & -0.01 & 0.03 & 0.9
\\
\end{array}
\right )
\end{eqnarray*}
 is a non-stochastic K81 matrix (see (\ref{eq:K81})). Its nearest
stochastic matrix is
\begin{eqnarray*}
\footnotesize
\widehat{M} = \left (
\begin{array}{c@{\quad }c@{\quad }c@{\quad }c}
{0.89\wideparen{6}} &0.02\wideparen{6} & 0 & 0.07\wideparen{6}
\\
{0.02\wideparen{6}} & {0.89\wideparen{6}} & {0.07\wideparen{6}} & 0
\\
0 & {0.07\wideparen{6}} & {0.89\wideparen{6}} & {0.02\wideparen{6}}
\\
{0.07\wideparen{6}} & 0 & {0.02\wideparen{6}} & {0.89\wideparen{6}}
\\
\end{array}
\right ).
\end{eqnarray*}
 and preserves the same identities between entries. That is,
$\widehat{M}$ remains in the K81 model.
\end{rk}

\begin{lema}%
\label{lem:stochM}
Let $M$ be a $JC69$ matrix. Then $M$ is stochastic if and only if its eigenvalues
lie in $\left [-1/3,1\right ]$.
\end{lema}

\begin{proof}
Let $M$ be a JC69 matrix, that is, $M$ as in (\ref{eq:K81}) with
$c = d = b$, $a=1-3b$. Then, $M$ is stochastic if and only if
$b\geq 0$ and $a = 1-3b \geq 0$, which is equivalent to
$b\in [0,1/3]$. As the eigenvalues of $M$ are $m_{\mathtt{A}} = 1$ and
$m_{\mathtt{C}} = m_{\mathtt{G}} = m_{\mathtt{T}} = 1-4b$ (see Lemma~\ref{lema:eigenK81}), we get that $M$ is stochastic if and only if the
eigenvalue $1-4b$ lies in $[-1/3,1]$.
\end{proof}

\begin{lema}
Let $M$ be a non-stochastic JC69 matrix. Then $\widehat{M}$ is either the
identity matrix or the matrix
\begin{equation*}
\footnotesize%
\begin{pmatrix}

0&1/3&1/3&1/3
\\
1/3&0&1/3&1/3
\\
1/3&1/3&0&1/3
\\
1/3&1/3&1/3&0
\\
\end{pmatrix}%
.
\end{equation*}
\end{lema}

\begin{proof}
Let $M$ be a JC69 matrix with off-diagonal entries equal to $b$ and diagonal
entries equal to $a=1-3b$. Then it is not stochastic if either
$b<0$ or $a<0$. Let $v=(a,b,b,b)$ be the first row of $M$ and
$\widehat{v}=(\widehat{a},\widehat{b},\widehat{b},\widehat{b})$ its projection
onto the simplex $\Delta ^{3}$ (Lemma~\ref{props_projection} $(iv)$). The
following argument is valid for each row due to Lemma~\ref{props_projection} $(iii)$.

If $b<0$ then, by Lemma~\ref{props_projection} $(ii)$, $\widehat{b}$ equals
zero and $\widehat{a}$ has to be equal to $1$ since the coordinates of
$\widehat{v}$ sum to $1$. Therefore $\widehat{M}$ is the $4\times 4$ identity
matrix.

If $a < 0$ then $\widehat{a}=0$ and since $3\widehat{b}=1$,
$\widehat{b} = \frac{1}{3}$. Therefore $\widehat{M}$ is a matrix with
$0$ in the diagonal and $\frac{1}{3}$ at the non-diagonal entries.
\end{proof}

For later use, we close this section by stating a characterization of those
K81 matrices $M$ for which $\widehat{M}$ is a permutation matrix.

\begin{lema}%
\label{lema:perm_matrix}
Let $M$ be a $K81$ matrix and denote by
$(a_{1}, a_{2}, a_{3}, a_{4})$ its first row. Then $\widehat{M}$ is a permutation
matrix if and only if there is some $i\in \{1,\ldots ,4\}$ such that
\begin{equation*}
\label{eq:permutationMatrix}
a_{i} - a_{j} \geq 1 \mbox{ for all } j\neq i.
\end{equation*}
\end{lema}

\begin{proof}
This is an immediate consequence of Lemma~\ref{props_projection}
$(v)$.
\end{proof}

\section{The case of short external branches}\label{sec4}

In this section we study evolutionary processes where substitutions at
the external edges are unusual, so that probabilities of substitution of
nucleotides in the corresponding transition matrices are small. This translates
to matrices close to the identity at the external edges and
\emph{short branch lengths}, as explained in the Introduction.

We use the results of Section~\ref{sec:matrix} with $N=4^{4}$ and we stick
to the K81 model. Given $P\in \mathbb{R}^{4^{4}}$, let $P_{T}^{+}$ be
a point in $\mathcal{V}_{T}^{+}$ that minimizes the distance to $P$, i.e.
\begin{equation*}
d(P,P_{T}^{+})=d(P,\mathcal{V}_{T}^{+}).
\end{equation*}

The following result shows that a point arising from a tree $T$ with no
substitutions at the external edges is always closer to the stochastic
region of $T$ than to any other tree. See Fig.~\ref{fig:prop41b} for
an illustration of this result.

\begin{prop}%
\label{k3_id_prop}
Assume that $P=\psi _{T}(Id, Id, Id, Id, M)$ where $M$ is a non-stochastic
$K81$ matrix and $T$ is any $4$-leaved tree. Then,
\def\theenumi{\alph{enumi}}\def\labelenumi{\emph{(\theenumi)}}
\begin{enumerate}[(a)]
\item The point $P^{+}_{T}$ is equal to
$\psi _{T}(Id, Id, Id, Id, \widehat{M})$. Moreover, $P^{+}_{T}$ coincides
with the point that minimizes the distance to the standard simplex
$\Delta \subset \mathbb{R}^{4^{4}}$. In particular, the point
$P^{+}_{T}$ is unique.
\begin{figure}
\includegraphics[scale=1.8]{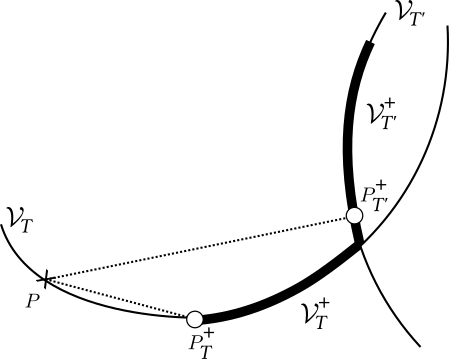}
\caption{\footnotesize{For two different quartet trees $T,T'$, the phylogenetic varieties
$\mathcal{V}_{T}$ and $\mathcal{V}_{T'}$ are represented as curves, with the
intersection being reduced to only one point. The stochastic regions are
represented with thick stroke. The point $P=\psi (Id,Id,Id,Id,M)$ with $M$ not
stochastic lies in $\mathcal{V}_{T}$ but not in the stochastic region
$\mathcal{V}_{T}^{+}$. The points $P_{T}^{+}$ and $P_{T'}^{+}$ represent points
that minimize the distance from $P$ to $\mathcal{V}_{T}^{+}$ and
$\mathcal{V}_{T'}^{+}$, respectively. The figure illustrates that
$d(P,\mathcal{V}_{T}^{+})\leq d(P;\mathcal{V}_{T'}^{+})$ (see (b) of Proposition~\ref{k3_id_prop}).}}
\label{fig:prop41b}
\end{figure}
\item If $T'\neq T$ is another tree in $\mathcal{T}$, then
$d(P,\mathcal{V}^{+}_{T'}) \geq d(P,\mathcal{V}^{+}_{T})$.
\item The following are equivalent:
\begin{enumerate}[(iii)]
\item[\emph{(i)}] equality holds in (b);
\item[\emph{(ii)}]
$P^{+}_{T} \in \mathcal{V}^{+}_{T} \cap \mathcal{V}^{+}_{T'}$;
\item[\emph{(iii)}] the matrix $\widehat{M}$ is a permutation matrix.
\end{enumerate}
\end{enumerate}
\end{prop}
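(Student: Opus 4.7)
The proof naturally splits into three stages. My plan is to let stage (a) carry most of the technical content, after which (b) and (c) follow with limited extra work. For (a), my starting observation is that evaluating $\psi_T$ with the identity on all four external edges produces an extremely sparse tensor: only $16$ of the $256$ entries of $P$ are nonzero, namely $p_{y_1 y_1 y_2 y_2}=\tfrac14 M(y_1,y_2)$. Since the rows of $M$ sum to $1$ we have $P\in\mathcal{H}$, and the $240$ vanishing entries are trivially $\leq 0$; by Lemma~\ref{props_projection}(ii) the orthogonal projection $\widehat{P}$ of $P$ onto $\Delta$ must also vanish on those $240$ coordinates, so $\widehat{P}$ is supported on exactly the same positions as $P$.

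The crucial step is to show that, restricted to these $16$ positions, $\widehat{P}$ coincides with $\tfrac14\widehat{M}$. This is not automatic, because the projection onto the $16$-simplex is less constrained than the projection onto the space of stochastic matrices. I would identify both projections by invoking the K81 structure of $M$: its rows are permutations of a single vector, so Lemma~\ref{props_projection}(iii) forces the KKT Lagrange multipliers for the row-wise $4$-simplex projections to share a common value $\mu$, and this $\mu$ equals (up to the factor $4$) the single KKT multiplier $\lambda$ of the $16$-simplex projection since both are determined by the same equation $\sum_{y\in\{a,b,c,d\}}(y-\mu)_+=1$. Hence $\widehat{P}=\psi_T(Id,\ldots,Id,\widehat{M})\in\mathcal{V}_T^+\subset\Delta$, and the inclusion $\mathcal{V}_T^+\subset\Delta$ gives $d(P,\mathcal{V}_T^+)\geq d(P,\Delta)=d(P,\widehat{P})$ with equality since $\widehat{P}\in\mathcal{V}_T^+$. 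Uniqueness of $P_T^+$ then follows from uniqueness of projection onto convex sets. Part (b) is immediate from $\mathcal{V}_{T'}^+\subset\Delta$, and the equivalence (i)$\Leftrightarrow$(ii) in (c) follows from the same uniqueness: equality in (b) forces the unique $\Delta$-minimizer $P_T^+$ to lie in $\mathcal{V}_{T'}^+$ as well, while the converse direction is immediate.

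The remaining content of (c) is (ii)$\Leftrightarrow$(iii), which I would handle in Fourier coordinates; write $n^i_y$ for the Fourier parameters of a hypothetical $T'$-parametrization of $P_T^+$. A helpful preliminary is that on the Klein $4$-group $(\mathbb{Z}/2\mathbb{Z})^2$ the support condition $\sum_{i\in A}x_i=\sum_{j\in B}x_j$ reduces to $x_1+x_2+x_3+x_4=0$ for \emph{every} bipartition, so in Fourier coordinates the three quartet varieties share a common support. For (iii)$\Rightarrow$(ii), the Fourier parameters of the K81 permutation $\widehat{M}$ take values in $\{\pm 1\}$ by Lemma~\ref{lema:eigenK81} and define a character $\chi$ of the Klein $4$-group; the multiplicativity $\chi(x_1+x_2)=\chi(x_1)\chi(x_2)$ then rewrites the Fourier coordinates of $P_T^+$ in the $T'$-parametrized form, with two external factors equal to the K81 permutation associated to $\chi$ and the other matrices equal to the identity. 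I expect (ii)$\Rightarrow$(iii) to be the main obstacle: assuming a $T'$-parametrization of $P_T^+$ leads to the identity
\begin{equation*}
\widehat{m}^5_{x_1+x_2}\;=\;n^{1}_{x_1}\,n^{2}_{x_2}\,n^{3}_{x_3}\,n^{4}_{x_4}\,n^{5}_{x_1+x_3}
\end{equation*}
on the common support. Specialising $x_1$ or $x_3$ to the neutral element of the group (and using $n^i_{\mathtt{A}}=1$) produces relations of the form $n^3_y n^4_y n^5_y=n^1_y n^2_y n^5_y=1$ for every $y$, and the bound $|n^i_y|\leq 1$ valid for stochastic K81 Fourier parameters then forces every $n^i_y\in\{\pm 1\}$. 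Consequently $\widehat{m}^5_y\in\{\pm 1\}$ for all $y$, and a direct inversion through Lemma~\ref{lema:eigenK81} identifies $\widehat{M}$ as a K81 permutation matrix.
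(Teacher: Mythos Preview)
Your proof is correct. For (a), (b), and (i)$\Leftrightarrow$(ii) you follow the paper's line: both arguments hinge on $\mathcal{V}_T^+\subset\Delta$ and identify $P_T^+$ with the unique simplex projection $\widehat{P}$. The paper reaches $\widehat{P}=\psi_T(Id,\ldots,Id,\widehat{M})$ via Lemma~\ref{props_projection}(iv) (equal coordinates of $P$ stay equal in $\widehat{P}$, forcing the K81 pattern on the sixteen nonvanishing entries), whereas you obtain the same identification through an explicit KKT-multiplier computation; these are two presentations of the same mechanism. The genuine divergence is in (ii)$\Leftrightarrow$(iii). The paper argues via flattenings: $\mathrm{flatt}_{T'}(\widehat{P})$ is a $16\times 16$ diagonal matrix whose diagonal consists of the entries of $\widehat{M}$, so the rank-$4$ constraint coming from $\widehat{P}\in\mathcal{V}_{T'}$ forces twelve of them to vanish. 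You instead stay in Fourier coordinates, exploiting that over $(\mathbb{Z}/2\mathbb{Z})^2$ all three quartet supports coincide, and derive the product relations $n^1_y n^2_y n^5_y=n^3_y n^4_y n^5_y=1$, which together with $|n^i_y|\leq 1$ force every Fourier parameter into $\{\pm 1\}$. Your route is self-contained and avoids invoking the rank characterisation of $\mathcal{V}_{T'}$, at the price of a small case check at the end (the only \emph{stochastic} K81 matrices with spectrum in $\{\pm1\}$ are the four K81 permutations); the paper's flattening argument is quicker but imports an external structural fact. One minor imprecision: the relations you need arise from specialising a \emph{pair} of indices to $\mathtt{A}$ (namely $x_3=x_4=\mathtt{A}$, respectively $x_1=x_2=\mathtt{A}$), not a single $x_1$ or $x_3$ as written, but this does not affect the argument.
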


\begin{proof}
We assume that $T = T_{12\mid 34}$, but the proof is analogous for the
other trees. We define $\widehat{P}$ to be the closest point to $P$ in
$\Delta $ (which is a convex set), see Lemma~\ref{props_projection}. First
of all, as $\mathcal{V}_{T}^{+}\subset \Delta $, we have that
%
\begin{eqnarray}%
\label{aux_dist}
d(P,\mathcal{V}_{T}^{+}) = \min _{Q\in \mathcal{V}_{T}^{+}}d(P,Q)
\geq \min _{Q\in \Delta }d(P,Q) =d(P,\widehat{P}).
\end{eqnarray}

We now show that $\widehat{P}\in \mathcal{V}^{+}_{T}$. Since the transition
matrices at the exterior edges of $T$ are the identity matrix, the coordinates
of $P$ are
\begin{equation*}
p_{ijkl} =
\begin{cases}
\frac{1}{4}(M)_{ik} &\text{if $i=j$ and $k=l$}
\\
0 &\text{otherwise.}
\end{cases}
\end{equation*}

Since $M$ is a K81 matrix the non-zero coordinates of $P$ only take
$4$ different values. Moreover, because of Lemma~\ref{props_projection} (ii) and (iv), we can write the coordinates of
$\widehat{P}$ as
\begin{equation*}
\widehat{p}_{ijkl} =
\begin{cases}
b_{ik} &\text{if $i=j$ and $k=l$}
\\
0 &\text{otherwise.}
\end{cases}
\end{equation*}
for some values $b_{ik}$ satisfying the identities of a K81 matrix (see
(\ref{eq:K81})). Since $\widehat{P}$ belongs to the simplex, we have that
$\sum _{i,k} b_{ik}=1$. It follows that the matrix
\begin{eqnarray*}
\footnotesize
4\left (
\begin{array}{cccc}
b_{11} & b_{12} & b_{13} & b_{14}
\\
b_{21} & b_{22} & b_{23} & b_{24}
\\
b_{31} & b_{32} & b_{33} & b_{34}
\\
b_{41} & b_{42} & b_{43} & b_{44}
\\
\end{array}
\right )
\end{eqnarray*}
is a K81 stochastic matrix. Actually, this matrix is just
$\widehat{M}$, and so,
$\widehat{P}=\psi _{T}(Id,Id,Id,Id,\widehat{M})$. In particular,
$\widehat{P}\in \mathcal{V}^{+}_{T}$. Since $P^{+}_{T}$ minimizes the
distance from $P$ to the variety $\mathcal{V}^{+}_{T}$, we have
$d(P,\widehat{P})\geq d(P,P^{+}_{T})$. Because of (\ref{aux_dist}), the
equality holds. Moreover, from the uniqueness of the point minimizing the
distance to $\Delta $, it follows that $P^{+}_{T}=\widehat{P}$. This concludes
the proof of (a).

(b) For any tree topology $T'$, we have that
$\mathcal{V}^{+}_{T'}\subset \Delta $. It follows that
$d(P,\widehat{P})\leq d(P,P^{+}_{T'})$. Since
$\widehat{P}=P^{+}_{T}$, we infer that
$d(P,P^{+}_{T})\leq d(P,P^{+}_{T'})$ for any $T'\neq T$.

(c) Now, we proceed to characterize when the equality holds in (b).

\noindent (i) $\Leftrightarrow $ (ii). It is clear that if
$P^{+}_{T}\in \mathcal{V}^{+}_{T'}$, then
$d(P,\mathcal{V}^{+}_{T})=d(P,P^{+}_{T})\geq d(P,\mathcal{V}^{+}_{T'})$.
Together with the inequality in (b), this proves that (ii) implies (i).
Conversely, if the equality holds, then
$d(P,P^{+}_{T'})=d(P,\Delta )$. Because of the uniqueness of the point
that minimizes the distance to $\Delta $, it follows that
$P^{+}_{T'}=\widehat{P}$, and we have already seen that
$\widehat{P}\in \mathcal{V}^{+}_{T}$. Therefore,
$P^{+}_{T}\in \mathcal{V}^{+}_{T} \cap \mathcal{V}^{+}_{T'}$.

\noindent
(ii) $\Leftrightarrow $ (iii). It only remains to see that
$P^{+}_{T'} = P^{+}_{T}$ (i.e.
$P_{T}^{+}\in \mathcal{V}_{T}^{+}\cap \mathcal{V}_{T'}^{+}$) if and only
if $M$ is a permutation matrix. Assume that
$\widehat{P} \in \mathcal{V}^{+}_{T'}$ and let
$F=flatt_{T'}(\widehat{P})$ be the $16\times 16$ ``flattening'' matrix obtained
by rearranging the coordinates of $\widehat{P}$ according to the bipartition
of leaves induced by $T'$. For example, if $T'=13|24$, then
$F_{(i,j)(k,l)}=\widehat{P}_{ikjl}$. Then it is well known that the rank
of $F$ is less than or equal to $4$ \citep[see][]{allman2003} because
$\widehat{P} \in \mathcal{V}_{T'}$. On the other hand, as
$\widehat{P}=\psi _{T}(Id,\ldots ,Id,\widehat{M})$, $flatt_{T'}(P)$ is
a diagonal matrix whose diagonal is formed by the $16$ entries of
$\widehat{M}$ multiplied by a constant \citep[see][]{allman2009}. The only
way this matrix has rank $\leq 4$ is by imposing the vanishing of 12 entries.
Since $M$ is a $K81$ stochastic matrix, $\widehat{M}$ has to be a permutation
matrix. Conversely, if $\widehat{M}$ is a permutation matrix, then the
corresponding point
$\widehat{P}=\psi _{T}(Id,\ldots ,Id,\widehat{M})$ lies in variety
$\mathcal{V}^{+}_{T'}$ for every $T'\in \mathcal{T}$.
\end{proof}

\begin{rk}
Note that $P_{T}^{+}$ coincides with
$\psi _{T}(Id, Id, Id, Id, \widehat{M})$ but also with any tensor obtained
by a label swapping of the parameters \citep{allman2004b}.
\end{rk}

In the following theorem we prove that, for any point $P$ close enough
to the point $P_{0}=\psi _{T}(Id, Id, Id, Id, {M})$ of Proposition~\ref{k3_id_prop}, the same holds: $P$ is closer to the stochastic region
$\mathcal{V}_{T}^{+}$ than to the stochastic region
$\mathcal{V}_{T'}^{+}$ for $T'\neq T$ (see Fig.~\ref{fig_projection} for an illustration). We need to exclude the case
$d(P, \mathcal{V}_{T}^{+}) = d(P, \mathcal{V}_{T'}^{+})$ (case (c) of
Proposition~\ref{k3_id_prop}) if we want strict inequality.
\begin{figure}
\includegraphics[scale=1.8]{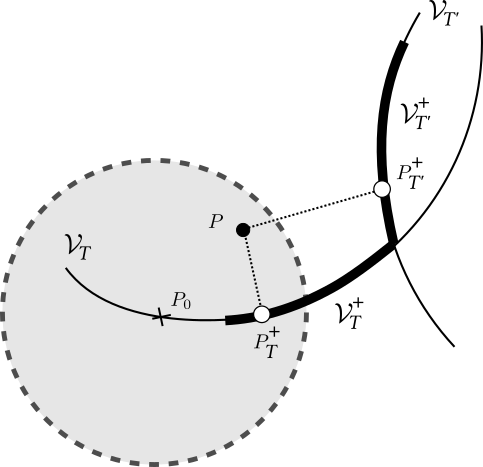}
\caption{\label{fig_projection} \footnotesize{The point $P_{0}$ (see Theorem~\ref{thm:seb}) lies in $\mathcal{V}_{T}$
but not in $\mathcal{V}_{T}^{+}$. As long as a point $P$ lies close to $P_{0}$,
namely $d(P,P_{0}) < (d(P_{0},\mathcal{V}_{T'}^{+}) -
d(P_{0},\mathcal{V}_{T}^{+}))/2$, it will remain closer to the stochastic region
$\mathcal{V}_{T}^{+}$ than to the stochastic region $\mathcal{V}_{T'}^{+}$ for
$T'\neq T$.}}
\end{figure}

\begin{thm}%
\label{thm:seb}
Let $M$ be a $K81$ non-stochastic matrix such that $\widehat{M}$ is not
a permutation matrix (see Lemma~\ref{lema:perm_matrix} for a characterization).
Let $P_{0}=\psi _{T}(Id, Id, Id, Id, M)$,
$T'\in \mathcal{T}\setminus \{T\}$, and let
$P\in \mathbb{R}^{4^{4}}$ be a point such that
\begin{eqnarray*}
d(P,P_{0}) <
\dfrac{d(P_{0},\mathcal{V}_{T'}^{+}) - d(P_{0},\mathcal{V}_{T}^{+})}{2}
\end{eqnarray*}
(this is satisfied if $P$ is close enough to $P_{0}$). Then
$d(P, \mathcal{V}_{T}^{+}) < d(P, \mathcal{V}_{T'}^{+})$.
\end{thm}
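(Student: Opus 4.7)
The plan is to reduce the statement to the $1$-Lipschitz property of the distance-to-a-set function, using Proposition~\ref{k3_id_prop} as the geometric input that anchors the argument at the point $P_{0}$.

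First, I would note that Proposition~\ref{k3_id_prop}(b) gives the weak inequality $d(P_{0},\mathcal{V}_{T}^{+})\leq d(P_{0},\mathcal{V}_{T'}^{+})$, and Proposition~\ref{k3_id_prop}(c) converts it into a \emph{strict} inequality under our hypothesis: since $\widehat{M}$ is not a permutation matrix, equality in (b) fails, so
\[
\delta \;:=\; d(P_{0},\mathcal{V}_{T'}^{+}) - d(P_{0},\mathcal{V}_{T}^{+}) \;>\; 0,
\]
and the hypothesis on $P$ reads $d(P,P_{0}) < \delta/2$. This step is where the assumption on $\widehat{M}$ is really used; the rest is formal.

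Next, I would invoke the standard fact that for any nonempty set $A\subset \mathbb{R}^{N}$ the function $x\mapsto d(x,A)$ is $1$-Lipschitz, i.e.\ $|d(x,A)-d(y,A)|\leq \|x-y\|$. Applying this with $A=\mathcal{V}_{T}^{+}$ and with $A=\mathcal{V}_{T'}^{+}$, at the points $x=P$ and $y=P_{0}$, yields
\[
d(P,\mathcal{V}_{T}^{+}) \;\leq\; d(P_{0},\mathcal{V}_{T}^{+}) + d(P,P_{0}),
\qquad
d(P,\mathcal{V}_{T'}^{+}) \;\geq\; d(P_{0},\mathcal{V}_{T'}^{+}) - d(P,P_{0}).
\]
(If I wanted to avoid quoting the Lipschitz property, I could derive both bounds directly from the triangle inequality applied to the minimizer in $\mathcal{V}_{T}^{+}$ closest to $P_{0}$ for the first estimate, and to the minimizer in $\mathcal{V}_{T'}^{+}$ closest to $P$ for the second.)

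Finally, subtracting the two displayed inequalities gives
\[
d(P,\mathcal{V}_{T'}^{+}) - d(P,\mathcal{V}_{T}^{+}) \;\geq\; \delta - 2\, d(P,P_{0}) \;>\; \delta - \delta \;=\; 0,
\]
which is exactly the desired strict inequality. The main obstacle is really only the first paragraph, namely making sure the inequality in Proposition~\ref{k3_id_prop}(b) is strict; once that positive gap $\delta$ is secured, the Lipschitz estimate closes the proof immediately, and the hypothesized bound $d(P,P_{0})<\delta/2$ is precisely what is needed to absorb both perturbations.
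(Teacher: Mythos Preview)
Your proof is correct and follows essentially the same approach as the paper: both arguments use Proposition~\ref{k3_id_prop}(b)--(c) to secure the strict gap $\delta>0$ at $P_{0}$, and then the $1$-Lipschitz property of the distance-to-a-set function (equivalently, the triangle inequality) to transfer the strict inequality to any $P$ with $d(P,P_{0})<\delta/2$. The paper phrases the last step via the auxiliary function $f(Q)=d(Q,\mathcal{V}_{T'}^{+})-d(Q,\mathcal{V}_{T}^{+})$ and the bound $|f(P)-f(P_{0})|\leq 2\,d(P,P_{0})$, which is exactly your subtraction written differently.
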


\begin{proof}
We first define the function
$f(Q) = d(Q,V_{T'}^{+}) - d(Q,V_{T}^{+})$. By hypothesis,
$\widehat{M}$ is not a permutation matrix and by Proposition~\ref{k3_id_prop}, we have that $f(P_{0}) > 0$. We want to show that
$f(P) > 0$ if $d(P,P_{0})<f(P_{0})/2$. Clearly, we are done if
$f(P)\geq f(P_{0})$, so we assume that $f(P)<f(P_{0})$. From the triangle
inequality we have
$|d(P,\mathcal{W}) - d(P_{0}, \mathcal{W})| \leq d(P,P_{0})$, for any
$\mathcal{W}\subset \mathbb{R}^{N}$. Then, we obtain
\begin{align*}
\lvert f(P) - f(P_{0}) \rvert &= \lvert d(P, \mathcal{V}^{+}_{T'}) - d(P_{0},
\mathcal{V}^{+}_{T'}) - \left (d(P, \mathcal{V}^{+}_{T}) - d(P_{0},
\mathcal{V}^{+}_{T})\right ) \rvert \leq
\nonumber
\\
&\leq \lvert d(P,\mathcal{V}^{+}_{T'}) - d(P_{0}, \mathcal{V}^{+}_{T'})
\rvert + \lvert d(P, \mathcal{V}^{+}_{T}) - d(P_{0}, \mathcal{V}^{+}_{T})
\rvert
\nonumber
\\
& \leq 2 \; d(P,P_{0}) < f(P_{0}).
\end{align*}
Therefore,
$f(P)=(f(P)-f(P_{0}))+f(P_{0})= -|f(P)-f(P_{0})|+f(P_{0}) > 0$. This concludes
the proof.
\end{proof}

\begin{example}
The matrix of Remark~\ref{example:nonstochasticK81} satisfies the hypothesis
of Theorem~\ref{thm:seb}.
\end{example}

\section{Computing the closest point to a stochastic phylogenetic region}
\label{sec5}

Although in the last section we were able to answer our questions analytically, this approach seems unfeasible when we want to tackle more general problems. In this section, in order to find the distance from a point to a stochastic phylogenetic variety we use numerical algebraic geometry.
Our goal is to find all critical points of the distance function to a phylogenetic variety in the interior and at the boundary of the stochastic region. Among the set of critical points we pick the one that minimizes the distance. Similar approaches, where computational and numerical algebraic geometry are applied to phylogenetics studies, can be found in the works \cite{NAG4LS} and \cite{Kosta2019}.

Let $d_\mathcal{X}(x)$ denote the Euclidean distance of a point $x$ to a (complex) algebraic variety $\mathcal{X}$, as a function of $x$.
If $\mathcal{X}_{sing}$ is the singular locus of $\mathcal{X}$, the number of critical points of $d_\mathcal{X}(x)$ in $\mathcal{X}	\setminus \mathcal{X}_{sing}$ for a general $x$ is called the \textit{Euclidean distance degree} (EDdegree for short) of the variety. The EDdegree was introduced in \cite{draisma2015} and it is currently an active field of research. According to Lemma $2.1$ of \cite{draisma2015}, the number of (complex) critical points of $d_\mathcal{X}(x)$ in $\mathcal{X}\setminus \mathcal{X}_{sing}$ is finite and constant on a dense subset.

In this section we assume the JC69 model and we parameterize each transition matrix by its eigenvalue different from $1$ (see Lemma \ref{lema:eigenK81}). As introduced in Section \ref{nota_param}, we denote by $\varphi_{T}(x_1,\ldots, x_5)$ the parameterization in the Fourier coordinates and Fourier parameters for a $4$-leaved tree $T$. Recall that, by Lemma \ref{lem:stochM}, $\varphi_{T}(x_1,\ldots, x_5)$ is a point in the stochastic region if and only if $x_i\in\left[-1/3, 1\right]$, $i=1,\ldots,5$.

Given a point $P$, we denote by $f_T(x_1,\ldots,x_5)$ the square of the Euclidean distance function from the point $\varphi_{T}\left(x_1,\ldots,x_5 \right)$ to $P$:
$$f_T(x_1,\ldots, x_5) = d(P,\varphi_{T}(x_1,\ldots,x_5))^2,$$
and by 
\begin{equation*}\label{eq:D}
	\mathcal{D}:=\left[-1/3,1\right]^5
\end{equation*} 
the region of stochastic parameters.

Under the Jukes-Cantor model, the singular points of the varieties $\VV_T$ are those that are the image of some null parameter. In other words, $\varphi_{T}(x_1,\ldots,x_5)$ is a singular point of the variety if and only if $x_i = 0$ for some $i$ (see \cite{casfer2008} and \cite{casfermich2015} for details).

Hence, we can compute the number of critical points of our function $f_T$ in the pre-image of the smooth part of the variety as the degree of saturation ideal $I:(x_1\cdots x_5)^{\infty}$, where $I$ is generated by the partial derivatives of $f_T$. Using this and the package \texttt{Magma} \cite{magma} we obtain:

\begin{lema}
	If $\VV_{\TT}$ is the phylogenetic variety corresponding to a $4$-leaf tree evolving under the JC69 model, then the EDdegree of $\VV_{\TT}$ is $290$.
\end{lema}

For identifying the critical points of this constrained problem we use the \emph{KKT conditions of first order for local minimums}.

\subsection*{Karush-Kuhn-Tucker conditions (KKT)}\label{KKT}

If $f,g_i:\RR^l \longrightarrow \RR$ are $\mathcal{C^{\infty}}$ functions for $i=1,\dots, n$, we consider the following minimization problem:

\begin{equation*}
	\begin{aligned}
	& \underset{\mathbf{x}}{\text{minimize}}
	& & f(\mathbf{x}) \\
	& \text{subject to}
	& & g_i(\mathbf{x}) \leq 0, \; i = 1, \ldots, n.
	\end{aligned}
\end{equation*}

If a point $\mathbf{x}^*$ that satisfies $g_i(\mathbf{x}^*)\leq0$ $\forall i=1,\ldots,m$ is a local optimum of the problem, then there exist some constants $\mu = (\mu_1,\ldots,\mu_n)$ (called \emph{KKT multipliers}) such that $\mathbf{x}^*$ and $\mu$ satisfy

\begin{itemize}
	\item[(i)] $-\nabla f(\mathbf{x}^*) = \sum_{i=1}^n \mu_i\nabla g_i(\mathbf{x}^*),$
	\item[(ii)] $\mu_i\geq0$ $\forall i=1,\ldots,n$,
	\item[(iii)] $\mu_ig_i(\mathbf{x}^*)=0$ $\forall i=1,\ldots,n$.
\end{itemize}

According to these conditions the algorithm falls naturally into two parts. First of all we find the $290$ critical points of the objective function over all $\CC^5$ and then we check the boundary of $\mathcal{D}$.

To find the critical points at the boundary we restrict the function $f_T$ to all possible boundary subsets and find critical points there. Namely, on the Jukes-Cantor model we write
\begin{eqnarray*}
	g_{1,i}(\mathbf{x}) := x_i -1 \leq 0 \qquad g_{2,i}(\mathbf{x}) := -x_i  - 1/3\leq 0
\end{eqnarray*}
for the inequalities defining the feasible region $\mathcal{D}$. Moreover, for each $i=1,\ldots,5$ and $l=1,2$, write
\begin{eqnarray*}
	S_{l,i}=\{\mathbf{x}=(x_1,\ldots,x_5)\mid g_{l,i}=0\}.
\end{eqnarray*}

Then $x$ is at the boundary of $\mathcal{D}$ if it belongs to the subset $S := \left(\cap_{i\in\iota_1}S_{1,i}\right)\cap\left(\cap_{j\in\iota_2}S_{2,j}\right)$ for some $\iota_1,\iota_2 \subseteq\{1,\ldots,5\}$ disjoint subsets.

We use homotopy continuation methods to solve the different polynomial systems previously described. All computations have been done with the package \texttt{PHCpack.m2} (\cite{PHCpack} and \cite{PHCpackM2}) which turned out to be the only numerical package capable to find these $290$ points of $I:(x_1\cdot \ldots\cdot x_5)^{\infty}.$ \texttt{Macaulay2} \cite{M2} has been used to implement the main core of the algorithm while some previous computations have been previously performed with \texttt{Magma} \cite{magma}. The whole code can be found in \cite{github_marina}.

\IncMargin{1em}
\begin{algorithm}[H]
	\vspace{0.15cm}
	\SetKwInput{KwData}{Input}
	\KwData{A point $P \in \RR^{4^4}$ and a topology $T$.}
	
	\vspace{0.15cm}
	Compute $f_T(\mathbf{x})$\;
	Compute $\mathcal{I}:=\left( \der{f_T}{x_1}, \der{f_T}{x_2}, \der{f_T}{x_3}, \der{f_T}{x_4}, \der{f_T}{x_5} \right)$\;
	$\mathcal{L} := \{\}$ \tcp*[r]{Empty list of valid critical points}
	$d := degree\big(I:(x_1\cdots x_5)^{\infty}\big)$ \;
	Find the $d$ $0$-dimensional solutions of $\nabla f_T$=0 \; 
	\ForEach{solution $x$}{
		\If{$\mathbf{x}\in\RR^5$ and $g_{l,i}(\mathbf{x})\leq0$ $\forall l,i$}{Add $\mathbf{x}$ to $\mathcal{L}$\;}
	}

	\ForEach{ disjoint subsets $\iota_1,\iota_2 \subseteq\{1,\ldots,5\}$}{
		$S := \left(\cap_{i\in\iota_1}S_{1,i}\right)\cap\left(\cap_{j\in\iota_2}S_{2,j}\right)$\;
		Find the solutions of $\nabla (f_T)_{|S}=0$ \;
		\If{$\mathbf{x}\in\RR^5$ and $g_{l,i}(\mathbf{x}) \leq 0$ $\forall l,i$}{
			Add $\mathbf{x}$ to $\mathcal{L}$ \;
		}
	}
	Evaluate each $x\in\mathcal{L}$ into $f_T(x)$ and return the point $x^*$ with minimum $f_T(x^*)$ \;

	\vspace{0.15cm}
	\SetKwInput{KwResult}{Output}
	\KwResult{Parameters $x^*_1,x^*_2,x^*_3,x^*_4,x^*_5$ such that $P_T^+:=\varphi_{T}(x^*_1,\ldots,x^*_5)\in\VV_T^+$ and $d(P,\VV_T^+) = d(P,P_T^+).$}
	\caption{The closest point to a stochastic phylogenetic region \label{algo}}
\end{algorithm}
\vspace{0.2cm}

\section{The long branch attraction case}\label{sec:lba}

\textit{Long branch attraction}, LBA for short, is one the most difficult
problems to cope with phylogenetic inference (see \cite{Kuck2012}). It
is a phenomenon that happens when fast evolving lineages are wrongly inferred
to be closely related. Quartet trees representing these events are characterized
for having two non-sister species that have accumulated many substitutions
and two non-sister species that have very similar DNA sequences.

The length of a branch in a phylogenetic tree represents the expected number
of elapsed mutations along the evolutionary process represented by the
branch and, for the K81 and JC69 models, is estimated as
$-log\big (det(M)\big )/4$, where $M$ is the transition matrix associated
to the edge. Thus, the LBA for quartet trees is represented as in Fig.~\ref{fig:lba}, with two long non-sister branches and two short non-sister
branches and interior edge. As the length of an edge is related to the
eigenvalues of the corresponding transition matrix, for the JC69 model
the eigenvalue different than 1 determines it.

\begin{figure}
\includegraphics[scale=0.7]{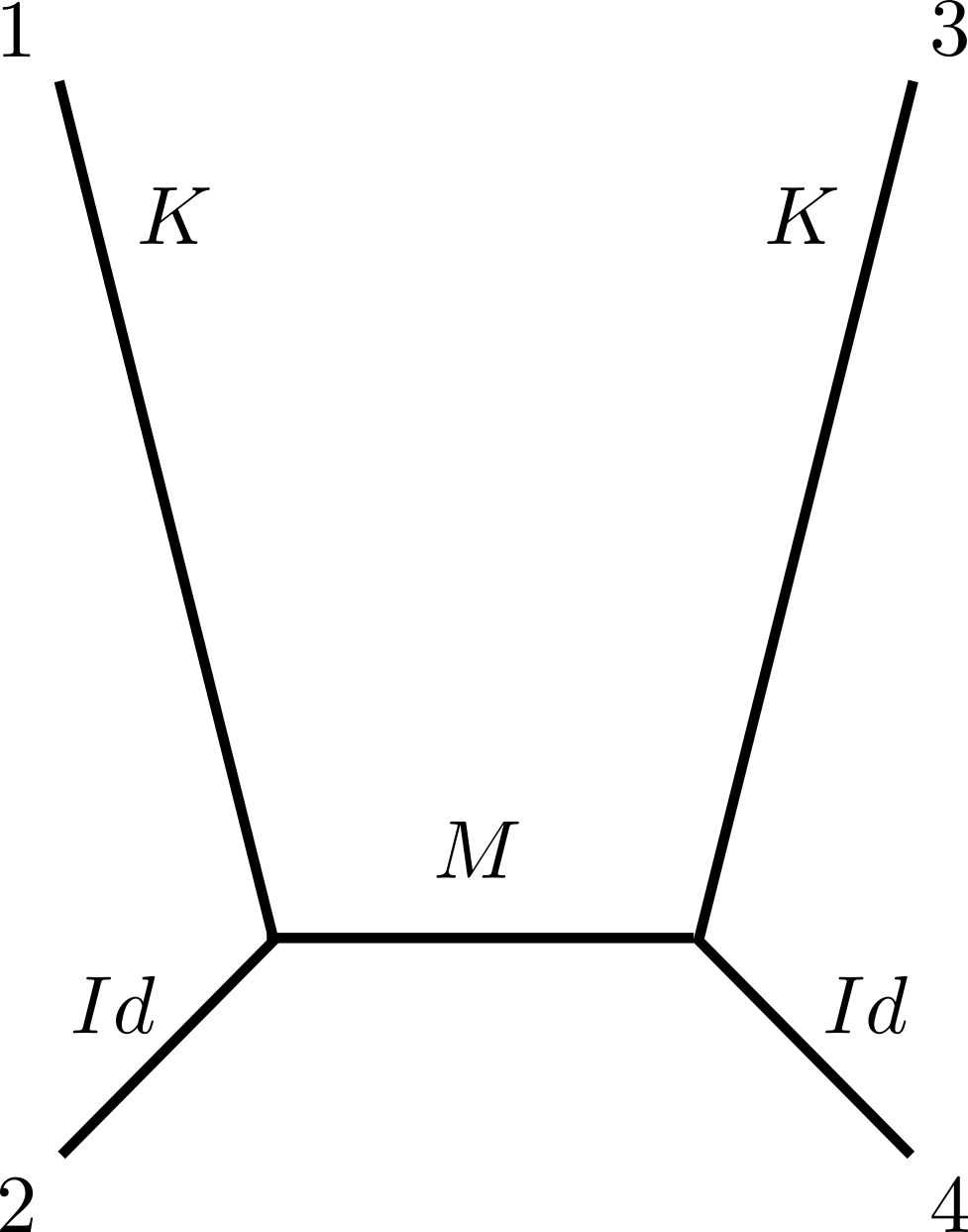}
\caption{Phylogenetic tree $12|34$ with transition matrices leading to the point
$P=\varphi _{12\mid 34}\left (k,1,k,1,m\right )$.}
\label{fig:lba}
\end{figure}

Throughout this section we use the notation introduced in Section~\ref{sec5}. Consider
the tree in Fig.~\ref{fig:lba}, with a non-stochastic matrix $M$ at the
interior edge, a stochastic transition matrix $K$ at edges pointing to
leaves $1$ and $3$, and the identity matrix $Id$ at the remaining edges.
Assume $K$ and $M$ are Jukes-Cantor matrices. Then, let $k$ (respectively
$m$) be the eigenvalue of $K$ (resp. of $M$) different from 1.\footnote{Since
JC69 matrices are determined by their eigenvalue other than 1, we will
adopt the convention of representing the matrix with a capital letter and
its eigenvalue by the same letter in lower case.} Since $K$ is stochastic,
$k$ is in $\left [-1/3, 1\right ]$ (see Lemma~\ref{lem:stochM}). We also
assume $m>1$ since $M$ is not stochastic (the other possibility would be
that $m<-1/3$, but this leads to a biologically unrealistic situation because
in these evolutionary models the transition matrices should not be too
far from the identity matrix if one wants to be able to do inference from
data). Let $P:=\varphi _{12|34}\left (k, 1, k, 1, m\right )$ be the Fourier
coordinates of the corresponding joint distribution.

In this section we study the distance of $P$ to the stochastic phylogenetic
regions $\mathcal{V}_{12|34}^{+}$, $\mathcal{V}_{13|24}^{+}$,
$\mathcal{V}_{14|23}^{+}$ to give an answer to Question 1. As observed
in Remark~\ref{rk:FourierOrthogonal}, we can use Fourier coordinates to
compute distances. Given
$P=\varphi _{12|34}\left (k, 1, k, 1, m\right )$ and
$T\in \mathcal{T}$, we want to find its closest point in
$\mathcal{V}_{T}^{+}$, so our goal is to find
$(x_{1},\ldots ,x_{5})\in \mathcal{D}$ such that
$ d\left (P, \mathcal{V}_{T}^{+}\right ) = d\left (P, \varphi _{T}
\left (x_{1}, x_{2}, x_{3}, x_{4}, x_{5}\right )\right )$.

Therefore, using the notation of Section~\ref{sec5}, we translate the problem of
finding the closest point to
$P=\varphi _{12|34}\left (k, 1, k, 1, m\right )$ in the stochastic phylogenetic
region $\mathcal{V}_{T}^{+}$ can be translated into the following optimization
problem:

\begin{pb}%
\label{OPT}
\begin{equation*}
\begin{aligned}
& \underset{\mathbf{x}}{\text{minimize}} & & f_{T}(\mathbf{x}) := d
\left (P, \varphi _{T}\left (x_{1}, x_{2}, x_{3}, x_{4}, x_{5}\right )
\right )^{2}
\\
& \text{subject to} & & g_{1,i}(\mathbf{x}) \leq 0, \; i = 1, \ldots , 5,
\\
& & & g_{2,i}(\mathbf{x}) \leq 0, \; i = 1, \ldots , 5.
\end{aligned}
\end{equation*}
where $g_{1,i}(\mathbf{x}) = x_{i} -1$ and
$g_{2,i}(\mathbf{x}) = -x_{i} - \frac{1}{3}$.
\end{pb}

\subsection{Local minimum}
\label{sec6.1}

The JC69 phylogenetic variety
$\mathcal{V}_{T} \subset \mathbb{R}^{4^{4}}$ has a linear span
$L_{T}$ of dimension $12$. As the closest point in $\mathcal{V}_{T}$ (resp.
$\mathcal{V}^{+}_{T}$) to a point $P \in \mathbb{R}^{4^{4}}$ coincides
with the closest point to $proj_{L}(P)$ in $\mathcal{V}_{T}$ (resp.
$\mathcal{V}^{+}_{T}$), it is enough to restrict to $L$ to compute optimal
points. However, the varieties $L_{T}$ differ for each tree and their union
spans a linear space of dimension $14$ \citep{casferked}.

For example, for $T=12|34$, the Euclidean distance (in Fourier coordinates)
from $P=\varphi _{12|34} (k, 1, k,\allowbreak  1, m )\in L_{12|34}$ to a point
$\varphi _{12|34}\left (x_{1}, x_{2}, x_{3}, x_{4}, x_{5}\right )\in
\mathcal{V}_{12|34}$ is given by the square root of the following function:
\begin{align*}
f_{12|34}(x_{1}, x_{2}, x_{3}, x_{4}, x_{5})
 :=\ &12\left (x_{1}x_{2}x_{3}x_{4}x_{5}
- k^{2}m\right )^{2} + 9\left (x_{1}x_{2}x_{3}x_{4} - k^{2}\right )^{2}
\\
&+ 6\left (x_{1}x_{2}x_{3}x_{5} - k^{2}m\right )^{2}
+ 6\left (x_{1}x_{2}x_{4}x_{5} - km\right )^{2}
\\
&
+ 6\left (x_{1}x_{3}x_{4}x_{5}
- k^{2}m\right )^{2}
+ 6\left (x_{2}x_{3}x_{4}x_{5} - km\right )^{2}
\\
&
+ 3\left (x_{1}x_{3}x_{5} - k^{2}m\right )^{2} + 3\left (x_{2}x_{3}x_{5}
- km\right )^{2} + 3\left (x_{1}x_{4}x_{5} - km\right )^{2}
\\
&+ 3\left (x_{2}x_{4}x_{5} - m\right )^{2} + 3\left (x_{1}x_{2} - k
\right )^{2} + 3\left (x_{3}x_{4} - k\right )^{2}.
\end{align*}
An initial numerical approach suggests a candidate $\mathbf{x}^{*}$ to
be a minimum of this optimization problem when $T=12|34$ (and also for
the other trees, as we will see later). Define
\begin{equation*}
\omega :=\frac{4}{9} +
\frac{11}{27\sqrt[3]{\frac{69+16\sqrt{3}}{243}}} +
\sqrt[3]{\frac{69+16\sqrt{3}}{243}}\approx 1.734
\end{equation*}
and the intervals
\begin{equation*}
I:=\left [-\frac{1}{3},1\right ]\quad \mbox{ and } \quad \Omega :=
\left (1,\omega \right ].
\end{equation*}
Straightforward computations show that the function
$x\mapsto f_{12|34}(x,1,x,1,1)$ has only one (real) critical point
$\tilde{x}$, when $(k,m)\in I\times \Omega $.

\begin{prop}%
\label{xtilde}
For $(k,m)\in I\times \Omega $ and $T\in \mathcal{T}$, the unique critical
point $\tilde{x}(k,m)$ of the function $f_{T}(x,1,x,1,1)$ is given by the
expression
\begin{eqnarray*}
\tilde{x}(k,m)=\frac{3k^{2}\left (3m + 1\right ) - 4}{36\gamma (k,m)} +
\gamma (k,m),
\end{eqnarray*}
where
$\gamma (k,m) =
\sqrt[3]{\frac{1}{24}k\left (3m + 1\right ) + \frac{1}{216}\sqrt{\alpha (k,m)}}$
and $\alpha (k,m)$ is a positive value given by
\begin{align*}
\alpha (k,m) =&\  -729k^{6}m^{3} - 27k^{6} + 108k^{4} - 243\left (3k^{6} -
4k^{4} - 3k^{2}\right )m^{2} - 63k^{2}
\\
& - 27\left (9k^{6} - 24k^{4} - 2k^{2}
\right )m + 64.
\end{align*}
 Moreover, the function
$\tilde{x}: I\times \Omega \rightarrow \mathbb{R}$ is a continuous function.
\end{prop}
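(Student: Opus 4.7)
The plan is to reduce the statement to a single depressed cubic, solve it with Cardano's formula, and then argue uniqueness, reality and continuity via positivity of the discriminant. The first key observation is that $\varphi_T(x,1,x,1,1)$ is the \emph{same} tensor for all three $T\in\mathcal{T}$. Indeed, by Theorem~\ref{thm_fourier} specialised to JC69, the nonzero Fourier coordinates of $\varphi_T(x_1,\ldots,x_5)$ are monomials $\tfrac{1}{4^4}m_{a_1}^{1}\cdots m_{a_4}^{4}\,m_{\cdot}^{5}$, and the support $\{a : a_1+a_2+a_3+a_4 = 0\}$ is common to all three bipartitions because every element of $(\mathbb{Z}/2\mathbb{Z})^2$ is its own inverse. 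Setting $x_2=x_4=x_5=1$ makes $m^2_{\cdot}, m^4_{\cdot}, m^5_{\cdot}$ identically $1$, which eliminates any dependence on $T$. Consequently it suffices to analyse $f_{12|34}(x,1,x,1,1)$, which after direct substitution into the displayed expression reads
\begin{equation*}
f(x) = 27(x^2-k^2m)^2 + 9(x^2-k^2)^2 + 18(x-km)^2 + 6(x-k)^2 + 3(1-m)^2,
\end{equation*}
whose derivative, divided by $12$, yields the depressed cubic $12\,x^3 - \bigl(3k^2(3m+1)-4\bigr)x - k(3m+1) = 0$.

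The second step is Cardano's formula. Writing the cubic as $x^3+px+q=0$ with $p = -\bigl(3k^2(3m+1)-4\bigr)/12$ and $q = -k(3m+1)/12$, an explicit expansion gives
\begin{equation*}
4p^3 + 27q^2 = \frac{81\,k^2(3m+1)^2 - \bigl(3k^2(3m+1)-4\bigr)^3}{432} = \frac{\alpha(k,m)}{432},
\end{equation*}
so the standard discriminant condition for a unique real root coincides with $\alpha(k,m) > 0$. Under that assumption, Cardano produces $\tilde{x} = \gamma + \gamma'$, where $\gamma = \sqrt[3]{-q/2 + \sqrt{\alpha}/216}$ (matching the statement once one computes $-q/2 = k(3m+1)/24$) and $\gamma'$ is the companion cube root. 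The Vieta-type identity $\gamma\gamma' = -p/3 = (3k^2(3m+1)-4)/36$ then yields $\gamma' = (3k^2(3m+1)-4)/(36\gamma)$, recovering the stated closed form for $\tilde{x}(k,m)$.

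The main obstacle I expect is verifying $\alpha(k,m) > 0$ throughout $I\times\Omega$. The number $\omega$ has been calibrated exactly for this: substituting $m = u + 4/9$ into $\alpha(1,m)=0$ depresses it to $81u^3 - 99u - 46 = 0$, whose unique real Cardano root equals $\omega - 4/9$, producing the explicit expression $\sqrt[3]{(69+16\sqrt{3})/243}$ that appears in the definition of $\omega$. To extend positivity to the whole rectangle I would use that $\alpha$ is even in $k$ and view it, for each fixed $m\in(1,\omega]$, as a cubic in $\xi:=k^2\in[0,1]$; since $\alpha(0,m)=64$ and $\alpha(\pm 1,m)\ge 0$ (with equality only at the corner $m=\omega$), it suffices to rule out interior zeros by locating the critical points of this cubic (roots of a quadratic in $\xi$) and checking positivity of the resulting extremum on $m\in(1,\omega]$, a one-variable inequality manageable by elementary means. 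Once $\alpha>0$ is secured, continuity of $\tilde{x}:I\times\Omega\to\mathbb{R}$ is immediate from the continuity of the real cube root and of $\sqrt{\alpha}$, with the only potentially delicate point—the simultaneous vanishing of $\gamma$ and of $3k^2(3m+1)-4$—resolved by the limiting identity $\tilde{x}=\sqrt[3]{k(3m+1)/12}$, which one verifies by substitution in the cubic.
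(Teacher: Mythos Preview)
Your approach is the paper's: reduce to one depressed cubic via the $T$-independence of $\varphi_T(x,1,x,1,1)$, apply Cardano, and identify $\alpha>0$ as the discriminant condition for a unique real root. Your derivation of the cubic, the identity $4p^3+27q^2=\alpha/432$, and the calibration of $\omega$ through $\alpha(1,m)=0$ are all correct. Two executional differences are worth noting.

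For the positivity of $\alpha$, the paper (Lemma~\ref{lema:alphaPos}) treats $\alpha_m(k)$ as a degree-six even polynomial in $k$, shows it has exactly one positive real root, and then proves that this root is a strictly decreasing function of $m$ on $[1,\omega]$ (the key step being that $\alpha$, regarded instead as a cubic in $m$, has negative discriminant $-99179645184(k^6+3k^8)$ for every $k\neq 0$, so the root map is injective). Your cubic-in-$\xi=k^2$ outline is an equally valid alternative and actually shortens once you record that the $\xi$-derivative at $\xi=0$ equals $729m^2+54m-63>0$ for $m>1$: the cubic in $\xi$ is then increasing at $0$ with negative leading coefficient, hence has a single positive zero, which lies beyond $\xi=1$ precisely when $\alpha(1,m)\ge 0$. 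No interior-extremum check is needed.

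For continuity, however, the paper does \emph{not} argue by limits at a hypothetical $\gamma=0$ point. It proves directly (Lemma~\ref{lema:gamma}) that $\gamma\neq 0$ on $I\times\Omega$: $\gamma=0$ forces $9k(3m+1)=-\sqrt{\alpha}$, and squaring gives $\alpha-81k^2(3m+1)^2=-(9k^2m+3k^2-4)^3=0$, whence $k=-2/\sqrt{9m+3}<-1/3$ whenever $m<11/3$. You should add this short verification, because the proposition asserts the closed form itself, and that formula is literally undefined when $\gamma=0$; your limiting argument secures continuity of the real root of the cubic but does not validate the displayed expression at such a point.
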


This proposition is proved in Appendix~\ref{app:Thm62}. The computations
in this section and in the Appendix have been done with
\texttt{SageMath} \citep{sagemath} version $8.6$. From now on, we will use
the following notation: given $(k,m)\in I\times \Omega $, we denote
$\mathbf{x}^{*}\in \mathbb{R}^{5}$ the following point:
%
\begin{eqnarray}%
\label{xstar}
\textbf{x}^{*}= \left \{
\begin{array}{c@{\quad }l}
(\tilde{x}(k,m),1,\tilde{x}(k,m),1,1) & \mbox{ if }\tilde{x}(k,m)<1;
\\
(1,1,1,1,1) & \mbox{otherwise.}
\end{array}
\right .
\end{eqnarray}

As the parameter of $\mathbf{x}^{*}$ corresponding to the interior edge
is 1, $\varphi _{T}(\mathbf{x}^{*})$ belongs to the intersection of the
tree phylogenetic varieties
$\mathcal{V}_{12\mid 34}\cap \mathcal{V}_{13\mid 24}\cap
\mathcal{V}_{14\mid 23} $ (see also Lemma~\ref{k3_id_prop}). For that
reason it is natural to ask whether $\mathbf{x}^{*}$ is also a local minimum
of the optimization Problem~\ref{OPT} for $T=13|24$ or $T=14|23$.

\begin{thm}%
\label{thm:LocMin}
If $k\in [-1/3,1]$ and $m\in \Omega $, then $\mathbf{x}^{*}$ is a local
minimum of the optimization Problem~\ref{OPT} for any
$T\in \mathcal{T}$.
\end{thm}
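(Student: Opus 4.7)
The plan is to verify the first-order KKT conditions for Problem~\ref{OPT} at $\mathbf{x}^*$ together with a second-order sufficient condition for a strict local minimum, uniformly over $(k,m)\in I\times\Omega$ and for each $T\in\mathcal{T}$. The key structural observation is that the interior-edge coordinate of $\mathbf{x}^*$ is always $1$: an eigenvalue-$1$ matrix collapses the interior edge, so $\varphi_T(x_1,x_2,x_3,x_4,1)$ is independent of $T\in\mathcal{T}$. Consequently $f_T|_{x_5=1}$ is the same function for all three trees, and both the partial derivatives $\partial_{x_i}f_T(\mathbf{x}^*)$ with $i\in\{1,2,3,4\}$ and the corresponding $4\times 4$ Hessian block are tree-independent; only the derivative $\partial_{x_5}f_T(\mathbf{x}^*)$ along the interior edge must be analyzed tree by tree.

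First I would verify the first-order conditions. The data point $P=\varphi_{12|34}(k,1,k,1,m)$ is invariant under the coordinate swaps $x_1\leftrightarrow x_3$ and $x_2\leftrightarrow x_4$, which forces $\partial_{x_1}f_T(\mathbf{x}^*)=\partial_{x_3}f_T(\mathbf{x}^*)$ and $\partial_{x_2}f_T(\mathbf{x}^*)=\partial_{x_4}f_T(\mathbf{x}^*)$. By Proposition~\ref{xtilde}, $\tilde{x}(k,m)$ is a critical point of $x\mapsto f_T(x,1,x,1,1)$, so the chain rule gives $\partial_{x_1}f_T(\mathbf{x}^*)+\partial_{x_3}f_T(\mathbf{x}^*)=0$, and combined with the symmetry this yields $\partial_{x_1}f_T(\mathbf{x}^*)=\partial_{x_3}f_T(\mathbf{x}^*)=0$. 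The remaining KKT requirements at the active constraints $g_{1,2}=g_{1,4}=g_{1,5}=0$ (plus $g_{1,1}=g_{1,3}=0$ in the corner case $\tilde{x}\ge 1$) then reduce, again using the symmetry, to the sign inequalities $\partial_{x_2}f_T(\mathbf{x}^*)\le 0$ and $\partial_{x_5}f_T(\mathbf{x}^*)\le 0$. I would verify these by substituting the explicit expression for $\tilde{x}(k,m)$ from Proposition~\ref{xtilde} into the partial derivatives of the polynomial $f_T$ and analyzing the sign of the resulting algebraic expression in $(k,m)$ over $I\times\Omega$.

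For the second-order sufficient condition, in the generic case $\tilde{x}<1$ the tangent space of the active constraints at $\mathbf{x}^*$ is the $(x_1,x_3)$-plane, so it suffices to show that the $2\times 2$ Hessian block
\[
H \;=\; \begin{pmatrix} a & b \\ b & a \end{pmatrix}, \qquad a=\partial^2_{x_1}f_T(\mathbf{x}^*),\ b=\partial_{x_1}\partial_{x_3}f_T(\mathbf{x}^*),
\]
is positive definite. Its eigenvalues are $a\pm b$: positivity of $a+b=\tfrac12\frac{d^2}{dx^2}f_T(x,1,x,1,1)|_{x=\tilde{x}}$ follows because $\tilde{x}$ is the unique critical point of a coercive univariate polynomial, so it must be a strict local minimum of that one-variable restriction, while positivity of $a-b$ requires an explicit algebraic check. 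In the corner case $\tilde{x}(k,m)\ge 1$, all five box constraints are active at $\mathbf{x}^*=(1,1,1,1,1)$ and the critical cone degenerates to the origin, so strict positivity of every KKT multiplier automatically yields a strict local minimum.

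The main obstacle is the algebraic sign analysis: the quantities $\partial_{x_2}f_T(\mathbf{x}^*)$, $\partial_{x_5}f_T(\mathbf{x}^*)$ and the Hessian eigenvalue $a-b$ are algebraic functions of $(k,m)$ via $\tilde{x}(k,m)$, and certifying their signs uniformly on the two-dimensional region $I\times\Omega$ is not immediate. I expect the proof to rely on the same computer-algebra techniques (SageMath manipulations, Sturm sequences, discriminant analysis along the boundary of $I\times\Omega$) already used in the Appendix for Proposition~\ref{xtilde}, likely organized as a case split on whether $\tilde{x}(k,m)<1$ or $\tilde{x}(k,m)\ge 1$ and as a tree-by-tree treatment of the remaining coordinate $\partial_{x_5}f_T$.
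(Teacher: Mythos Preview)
Your proposal is correct and follows essentially the same route as the paper: KKT first-order conditions with the case split $\tilde{x}<1$ versus $\tilde{x}\ge 1$, the observation that $f_T|_{x_5=1}$ is tree-independent so only $\partial_{x_5}f_T(\mathbf{x}^*)$ needs a tree-by-tree treatment, and a second-order check via positive definiteness of the $2\times 2$ Hessian in the $(x_1,x_3)$-directions. The paper carries out the sign certifications you anticipate by passing to elimination ideals in $\mathbb{C}[k,m]$ (rather than Sturm sequences) and checks the Hessian via Sylvester's criterion rather than the eigenvalues $a\pm b$, but these are cosmetic differences; your symmetry/chain-rule derivation of $\partial_{x_1}f_T(\mathbf{x}^*)=\partial_{x_3}f_T(\mathbf{x}^*)=0$ and your observation that $a+b=\tfrac12 g''(\tilde{x})>0$ from coercivity are in fact cleaner than the paper's direct computations.
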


\begin{proof}
In order to prove that $\mathbf{x}^{*}$ is a local minimum we first show
that $\mathbf{x}^{*}$ satisfies the Karush-Kuhn-Tucker (KKT) conditions
defined in Section~\ref{KKT} 
for some KKT multipliers $\mu _{1,i}$,
$\mu _{2,i}$, $i=1,\dots , 5$.

Assume first that $\tilde{x}(k,m) < 1$. Then we observe that
$\partial _{x_{1}} \, f_{12|34} (\mathbf{x}^{*}) = \partial _{x_{3}}
\, f_{12|34} (\mathbf{x}^{*}) = 0$. Moreover we have
$g_{1,i}(\mathbf{x}^{*})=0$ for $i=2,4,5$,
$g_{1,i}(\mathbf{x}^{*})\neq 0$ for $i=1,3$ and
$g_{2,i}(\mathbf{x}^{*})\neq 0$ $\forall i$.

Therefore, by $(iii)$ of the KKT conditions, we need to take
\begin{eqnarray*}
\mu _{2,i} & = & 0, \mbox{ for } i=1,\ldots ,5
\\
\mu _{1,i} & = & 0, \mbox{ for } i = 1,3.
\end{eqnarray*}
Moreover,
$\nabla g_{1,i}(\mathbf{x}) = (0, \ldots , \place{i}{1}, \ldots , 0)^{t}$
for all $i$ and for every $\mathbf{x}$. Therefore condition $(i)$,
\begin{equation*}
-\nabla f_{12|34}(\mathbf{x}^{*}) = \mu _{1,2}\nabla g_{1,2}(
\mathbf{x}^{*}) + \mu _{1,4}\nabla g_{1,4}(\mathbf{x}^{*}) + \mu _{1,5}
\nabla g_{1,5}(\mathbf{x}^{*}),
\end{equation*}
is equivalent to
\begin{equation*}
\left (0, \partial _{x_{2}} \, f_{12|34} (\mathbf{x}^{*}), 0,
\partial _{x_{4}} \, f_{12|34} (\mathbf{x}^{*}), \partial _{x_{5}}
\, f_{12|34} (\mathbf{x}^{*}) \right )^{t} = -(0, \mu _{1,2}, 0,
\mu _{1,4}, \mu _{1,5})^{t},
\end{equation*}
which implies that necessarily
\begin{eqnarray*}
\mu _{1,2} = - \partial _{x_{2}} \, f_{12|34} (\mathbf{x}^{*});
\qquad \mu _{1,4} = -\partial _{x_{4}} \, f_{12|34} (\mathbf{x}^{*});
\qquad \mu _{1,5} = - \partial _{x_{5}} \, f_{12|34} (\mathbf{x}^{*})
\, .
\end{eqnarray*}
Because of condition $(iii)$, to conclude it is enough to show that these
partial derivatives are negative. This is proven in the first part of the
proofs of Lemma~\ref{lema:dif2_negativa}, Lemma~\ref{lema:dif4_negativa} and Lemma~\ref{lema:dif5_negativa} of the Appendix.

As a consequence, the entries of any directional derivative
$\partial _{\mathbf{v}} \, f_{12|34} (\mathbf{x}^{*})$ are less than
or equal to zero for any vector $\mathbf{v}$. Moreover
$\partial _{\mathbf{v}} \, f_{12|34} (\mathbf{x}^{*})$ is the zero vector
if and only if $\mathbf{v}$ belongs to the $x_{1}x_{3}$-plane. As according
to Lemma~\ref{lema:minim} (see Appendix), $\mathbf{x}^{*}$ is a local
minimum if we fix $x_{2}=x_{4}=x_{5}=1$, we can conclude that
$\mathbf{x}^{*}$ is a local minimum of $f_{12|34}$ on $\mathcal{D}$.

If $\tilde{x}(k,m)$ is greater than or equal to $1$, by the KKT conditions
and the same reasoning as before we need to prove that
$\partial _{x_{i}} \, f_{12|34} (\mathbf{x}^{*})$ is negative for every
$i$, since no partial derivative of $f_{12|34}$ vanishes on
$\mathbf{x}^{*}$. This is proven in the second case of Lemmas~\ref{lema:dif2_negativa}, \ref{lema:dif4_negativa},
\ref{lema:dif5_negativa} of the Appendix for $i=2, 4$ and $5$ respectively.
It is a consequence of the second case of Lemma~\ref{lema:minim} that the
partial derivatives with respect to $x_{1}$ and $x_{3}$ are also negatives
(see Corollary~\ref{cor:negx1x3} for the precise statement). Therefore
$\mathbf{x}^{*}$ is a local optimum.

The proof for topologies $13|24$ and $14|23$ follows directly from the
previous results since the functions $f_{13|24}$ and $f_{14|23}$ satisfy
$\partial _{x_{i}} \, f_{13|24} (\mathbf{x}^{*}) = \partial _{x_{i}}
\, f_{14|23} (\mathbf{x}^{*}) =\ \partial _{x_{i}} \, f_{12|34} (
\mathbf{x}^{*})$ for $i\neq 5$ and
$\partial _{x_{5}} \, f_{13|24} (\mathbf{x}^{*})$ and
$\partial _{x_{5}} \, f_{14|23} (\mathbf{x}^{*})$ are also negative by
Lemma~\ref{lema:dif5_1324_negativa} and Lemma~\ref{lema:dif5_1423_negativa} (see Appendix).
\end{proof}

\subsection{Global minimum}
\label{GlobMin}

Although we are not able to prove that the local minimum presented above
is indeed a global minimum, our evidences suggest that it is so for
$T=12|34$:

\begin{conjecture}%
\label{globalP0}
Let $T=12|34$ and
$P_{0}:=\varphi _{T}\left (k_{0}, 1, k_{0}, 1, m_{0}\right )$. If
$(k_{0}, m_{0})\in I\times \Omega $, then
\begin{equation*}
d(P_{0}, \mathcal{V}_{T}^{+}) = d\big (P_{0}, \varphi _{T}\big (
\tilde{x}(k_{0},m_{0}), 1, \tilde{x}(k_{0},m_{0}), 1, 1\big )\big )
\end{equation*}
and
$\varphi _{T}\big (\tilde{x}(k_{0},m_{0}), 1, \tilde{x}(k_{0},m_{0}), 1,
1\big )$ is the unique point in $\mathcal{V}_{T}^{+}$ that minimizes the
distance to $P_{0}$.
\end{conjecture}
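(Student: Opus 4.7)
The plan is to combine the local-minimum result of Theorem~\ref{thm:LocMin} with an enumeration of all KKT critical points of Problem~\ref{OPT} (restricted to $T=12|34$) in the compact box $\mathcal{D}=[-1/3,1]^5$, and then compare objective values. Since $\mathcal{D}$ is compact and $f_{12|34}$ continuous, a global minimizer exists and must satisfy the KKT conditions; by Theorem~\ref{thm:LocMin} the candidate $\mathbf{x}^*$ is one such point, and the task is to show no other KKT point attains a strictly smaller value of $f_{12|34}$.

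The first reduction I would use is symmetry. The point $P_0=\varphi_{12|34}(k_0,1,k_0,1,m_0)$ is invariant under the leaf-swap $1\leftrightarrow 3$, which corresponds to the involution $\sigma_1:(x_1,x_2,x_3,x_4,x_5)\mapsto (x_3,x_4,x_1,x_2,x_5)$ of the Fourier parameter space; it is also invariant under a second involution $\sigma_2$ obtained by transposing $K$ with itself (reflecting the cherry $\{1,2\}$). Both involutions preserve $f_{12|34}$ and $\mathcal{D}$, so the set of global minimizers is closed under the group $\langle \sigma_1,\sigma_2\rangle$. Uniqueness of the minimizer would then force it to lie in the symmetric slice $\Sigma=\{x_1=x_3,\,x_2=x_4\}$, on which $\mathbf{x}^*$ sits. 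The analysis therefore splits into (a) showing that on $\Sigma$ the point $\mathbf{x}^*$ is the unique global minimum, and (b) showing that no asymmetric KKT point beats the value $f_{12|34}(\mathbf{x}^*)$.

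For part (a), I would restrict $f_{12|34}$ to $\Sigma$ and study the three-variable function $h(x,y,z):=f_{12|34}(x,y,x,y,z)$ on $[-1/3,1]^3$. Interior critical points satisfy $\nabla h=0$, a polynomial system of moderate degree that I would handle by elimination (producing, after saturation by $xyz$, a univariate resultant in one of the variables), tracking real roots as $(k_0,m_0)$ varies over $I\times \Omega$. Boundary critical points are obtained by setting some subset of $\{x,y,z\}$ equal to $\pm 1/3$ or $1$ and repeating the one- or two-variable analysis; in most cases the restricted function factors enough that explicit comparison with $f_{12|34}(\mathbf{x}^*)$ is feasible. Proposition~\ref{xtilde} already pins down the critical point on the face $\{y=z=1\}$ as the only relevant one, so the remaining work is to verify that every other face is dominated.

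For part (b), I would run the algorithm of Section~\ref{sec5} (with PHCpack) on the full system $\nabla f_{12|34}=0$ for a dense sample of $(k_0,m_0)\in I\times \Omega$, verifying in each case that every asymmetric real critical point in $\mathcal{D}$ has strictly larger $f_{12|34}$ value than $\mathbf{x}^*$; an analytical version would then follow from a continuity/continuation argument exploiting that the number of real critical points is locally constant away from a discriminant locus (by the EDdegree framework of \citet{draisma2015}). The main obstacle is precisely this last step: the gradient ideal is of total degree high enough that an exact symbolic elimination over the entire two-parameter family $I\times \Omega$ is out of reach, and the asymmetric critical branches need to be tracked globally. Overcoming this obstacle — most likely by a combined numerical--symbolic argument that upgrades the finite sample into a rigorous sweep across $I\times \Omega$ — appears to be the main technical content separating Conjecture~\ref{globalP0} from Theorem~\ref{thm:LocMin}.
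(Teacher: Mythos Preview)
The statement you are attempting to prove is labelled as a \emph{conjecture} in the paper, and the paper does not supply a proof: the authors explicitly say they ``are not able to prove that the local minimum presented above is indeed a global minimum'' and offer only numerical evidence (1000 random parameter pairs verified via Algorithm~\ref{algo}). So there is no paper proof to compare against, and the honest conclusion of your own outline --- that the gap between Theorem~\ref{thm:LocMin} and Conjecture~\ref{globalP0} remains open --- is consistent with the paper's position.

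On the specifics of your outline, two points deserve correction. First, your involution $\sigma_1:(x_1,x_2,x_3,x_4,x_5)\mapsto(x_3,x_4,x_1,x_2,x_5)$ is indeed a symmetry of $f_{12|34}$ (it swaps the two cherries of $12|34$, not ``leaves $1\leftrightarrow 3$'' as you write), and it fixes the parameter point $(k_0,1,k_0,1,m_0)$. However, your second involution $\sigma_2$ is not a symmetry of $f_{12|34}$ for generic $k_0$: the swap $x_1\leftrightarrow x_2$ sends the term $3(x_2x_4x_5-m)^2$ to $3(x_1x_4x_5-m)^2$, which differs from the original $3(x_1x_4x_5-km)^2$ unless $k_0=1$. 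So the only genuine symmetry you have is $\sigma_1$, and its fixed locus is already the slice $\{x_1=x_3,\,x_2=x_4\}$. Second, the sentence ``Uniqueness of the minimizer would then force it to lie in the symmetric slice'' is circular: uniqueness is part of what the conjecture asserts, so you cannot invoke it to restrict the search. Symmetry only tells you that the set of global minimizers is $\sigma_1$-stable; it could still consist of a pair of asymmetric points swapped by $\sigma_1$.

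Beyond these points, your part (b) is exactly where the paper gives up and resorts to numerics, and your own description concedes that a symbolic elimination over the full two-parameter family $I\times\Omega$ is out of reach. The ``continuity/continuation'' upgrade you sketch would require controlling the real discriminant locus of a degree-290 critical ideal in two parameters, which is not a routine computation. In short: your plan is a reasonable roadmap, the paper follows essentially the same roadmap (KKT enumeration via Algorithm~\ref{algo}) but only numerically, and neither you nor the authors close the gap to a proof.
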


\begin{rk}%
We have tested the conjecture for $1000$ pairs of parameters
$(k,m)$ randomly chosen on the region
$\left (0, 1/4\right ] \times \left (1, 3/2\right ]$ in order to simulate
points close to the LBA phenomenon. Every experiment has verified that
the global minimum of the problem is unique and is the point
$\mathbf{x}^{*}$, which is defined as in (\ref{xstar}) (and was proved
to be a local minimum). The computations have been done with
\texttt{Macaulay2} and a list of the tested parameters $k$ and $m$ can
be found in \cite{github_marina}. Though the conjecture is stated for
$T=12|34$, it is also true for any $T\in \mathcal{T}$ by permuting the
parameters accordingly.
\end{rk}

\begin{figure}
\includegraphics[scale=1.8]{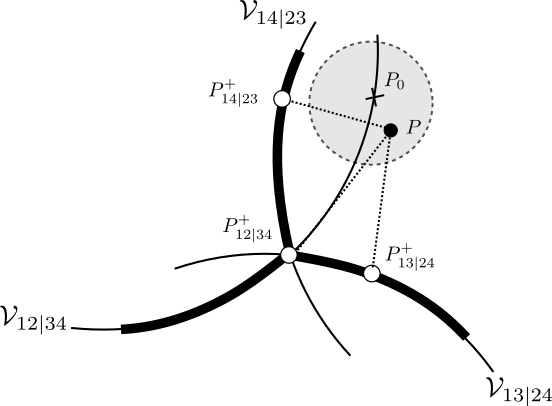}
\caption{The point $P_{0}$ lies in the phylogenetic variety
$\mathcal{V}_{12|34}$ outside the stochastic region ($m>1$). Under the
assumption of Theorem~\ref{thm_lba}, as long as the point $P$ is close to
$P_{0}$, it will remain closer to $\mathcal{V}_{13|24}^{+}$ or
$\mathcal{V}_{14|23}^{+}$ than to $\mathcal{V}_{12|34}^{+}$.}
\label{fig:prop41c}
\end{figure}

In the following theorem, we assume $T=12|34$ and prove that, for any point
$P$ close enough to a point
$P_{0}=\varphi _{T}\left (k_{0}, 1, k_{0}, 1, m_{0}\right )$ satisfying
the previous conjecture, the distance from $P$ to the stochastic phylogenetic
region $\mathcal{V}_{T'}^{+}$, for $T'\neq T$, is upper bounded by the
distance from $P$ to $\mathcal{V}_{T}^{+}$; see
Fig.~\ref{fig:prop41c} for an illustration.

\begin{thm}%
\label{thm_lba}
Let $T=12|34$ and $(k_{0}, m_{0})\in I\times \Omega $, and assume that
$P_{0}:=\varphi _{T}\left (k_{0}, 1, k_{0}, 1, m_{0}\right )\in
\mathcal{V}_{T}$ satisfies that the minimum distance from $P_{0}$ to
$\mathcal{V}_{T}^{+}$ is attained at a unique point $P_{0}^{+}$ given
by
$P_{0}^{+}=\varphi _{T}\big (\tilde{x}(k_{0},m_{0}), 1, \tilde{x}(k_{0},m_{0}),
1, 1\big )\big )$ with $\tilde{x}(k_{0},m_{0}) \neq 0$. Then, if $P$ is close
enough to $P_{0}$ and $T'\neq T$ is another tree in $\mathcal{T}$, its
closest point in $\mathcal{V}_{T}^{+}$ belongs also to
$\mathcal{V}^{+}_{T'}$. In particular,
\begin{equation*}
d(P,\mathcal{V}_{T}^{+}) \geq d(P,\mathcal{V}_{T'}^{+}).
\end{equation*}
\end{thm}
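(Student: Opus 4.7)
The plan is to show directly that, for $P$ sufficiently close to $P_{0}$, the closest point $P_{T}^{+}$ to $P$ in $\mathcal{V}_{T}^{+}$ has its fifth Fourier parameter equal to $1$; from this the conclusion follows, because the monomial formula of Theorem~\ref{thm_fourier} shows that setting $x_{5}=1$ makes $m^{5}_{\sum _{i\in A}\mathtt{x_{i}}}=1$ independently of the bipartition $A|B$, and (since in $\mathbb{Z}/2\mathbb{Z}\times \mathbb{Z}/2\mathbb{Z}$ the condition $x_{1}+x_{2}=x_{3}+x_{4}$ is equivalent to $x_{1}+x_{3}=x_{2}+x_{4}$ and to $x_{1}+x_{4}=x_{2}+x_{3}$) the zero patterns coincide as well. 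Hence $\varphi _{T}(x_{1},x_{2},x_{3},x_{4},1)=\varphi _{T'}(x_{1},x_{2},x_{3},x_{4},1)$ lies in $\mathcal{V}_{T'}^{+}$ for every $T'\in \mathcal{T}$, and $d(P,\mathcal{V}_{T'}^{+})\leq d(P,P_{T}^{+})=d(P,\mathcal{V}_{T}^{+})$.

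I would carry this out by a perturbation analysis of the KKT conditions for Problem~\ref{OPT} at the point $\mathbf{x}^{*}=(\tilde{x}(k_{0},m_{0}),1,\tilde{x}(k_{0},m_{0}),1,1)$. The proof of Theorem~\ref{thm:LocMin} already establishes that $\mathbf{x}^{*}$ is a KKT point with active set $\{g_{1,2},g_{1,4},g_{1,5}\}$, strictly positive multipliers $\mu _{1,i}=-\partial _{x_{i}}f_{T}(\mathbf{x}^{*})>0$ for $i=2,4,5$, and all remaining constraints strictly inactive. The active-constraint gradients $e_{2},e_{4},e_{5}$ are linearly independent, so LICQ holds; together with strict complementarity and the second-order sufficient condition (implicit in Lemma~\ref{lema:minim}), the implicit function theorem applied to the KKT system parametrized by $P$ produces a continuous local branch $P\mapsto (\mathbf{x}(P),\mu (P))$ on a neighborhood of $P_{0}$, with $(\mathbf{x}(P_{0}),\mu (P_{0}))=(\mathbf{x}^{*},\mu ^{*})$, the same active set, and $\mathbf{x}(P)$ a strict local minimum of $f_{T}(\,\cdot \,;P)$ on $\mathcal{D}$. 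In particular $x_{5}(P)=1$, so $\varphi _{T}(\mathbf{x}(P))\in \mathcal{V}_{T'}^{+}$ for every $T'\in \mathcal{T}$.

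It remains to upgrade $\varphi _{T}(\mathbf{x}(P))$ from a \emph{local} to a \emph{global} minimizer. Suppose, for contradiction, there is a sequence $P_{n}\to P_{0}$ and global minimizers $Q_{n}\in \mathcal{V}_{T}^{+}$ of $d(P_{n},\,\cdot \,)$ with $Q_{n}\neq \varphi _{T}(\mathbf{x}(P_{n}))$. Since $\mathcal{V}_{T}^{+}\subset \Delta $ is compact, after passing to a subsequence $Q_{n}\to Q_{\infty }\in \mathcal{V}_{T}^{+}$, and continuity of the distance function yields $d(P_{0},Q_{\infty })=\lim _{n}d(P_{n},\mathcal{V}_{T}^{+})=d(P_{0},\mathcal{V}_{T}^{+})$, forcing $Q_{\infty }=P_{0}^{+}$ by the uniqueness hypothesis. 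The hypothesis $\tilde{x}(k_{0},m_{0})\neq 0$ guarantees that no coordinate of $\mathbf{x}^{*}$ vanishes, so by the description of the singular locus recalled in Section~\ref{sec5}, $P_{0}^{+}$ is a smooth point of $\mathcal{V}_{T}$ and $\varphi _{T}$ is a local diffeomorphism onto its image near $\mathbf{x}^{*}$. Lifting $Q_{n}$ through this local inverse gives $\mathbf{x}'_{n}\to \mathbf{x}^{*}$, and the uniqueness part of the KKT branch supplied by the implicit function theorem forces $\mathbf{x}'_{n}=\mathbf{x}(P_{n})$ for $n$ large, contradicting the choice of $Q_{n}$.

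The main obstacle is precisely this last step: reconciling the \emph{local} KKT analysis with the \emph{global} minimum that defines $d(P,\mathcal{V}_{T}^{+})$. A priori the global minimizer could be realized at a parameter vector far from $\mathbf{x}^{*}$, since $\varphi _{T}$ admits non-trivial label-swapping symmetries (cf.~the remark after Proposition~\ref{k3_id_prop}), so a compactness argument relying on a \emph{unique} limit minimizer is essential; the uniqueness of $P_{0}^{+}$ together with the smoothness hypothesis $\tilde{x}(k_{0},m_{0})\neq 0$ are exactly the ingredients that make this argument go through.
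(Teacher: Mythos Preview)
Your proof is correct and reaches the same conclusion, but by a genuinely different mechanism than the paper's.

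The paper isolates the single coordinate that matters: it partitions $\mathcal{V}_T^+$ according to the first-order condition in the $x_5$-direction, setting $\mathcal{B}_{x_5=1}=\varphi_T(\mathcal{D}\cap\{x_5=1\})$ and letting $W_P$ collect the images of those $\mathbf{x}\in\mathcal{D}$ with $\partial_{x_5}f_{T,P}(\mathbf{x})=0$ together with the boundary piece $x_5=-1/3$. Any global minimizer of $f_{T,P}$ on $\mathcal{D}$ has image in $W_P\cup\mathcal{B}_{x_5=1}$. The paper then defines $g(P)=d(P,W_P)-d(P,\mathcal{B}_{x_5=1})$, notes that $g(P_0)>0$ (using that $\mathbf{x}_0^{*}$ is the unique preimage of $P_0^{+}$ and that $\partial_{x_5}f_{T,P_0}(\mathbf{x}_0^{*})<0$ by Lemma~\ref{lema:dif5_negativa}), and invokes continuity of $g$ to keep $g(P)>0$ nearby; hence the minimizer lands in $\mathcal{B}_{x_5=1}$.

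Your route instead treats all five coordinates at once via parametric NLP sensitivity: LICQ, strict complementarity, and the second-order condition (the positive-definite Hessian in Lemma~\ref{lema:minim}) give a smooth KKT branch with the active set $\{x_2=x_4=x_5=1\}$ frozen, and your compactness-plus-uniqueness argument promotes this local branch to the global minimizer. This is heavier machinery but arguably more airtight: the paper asserts continuity of $P\mapsto d(P,W_P)$ with $W_P$ itself varying in $P$, which is plausible but not argued, whereas your argument rests on standard sensitivity theorems. Conversely, the paper's decomposition is more economical---it only ever analyzes the $x_5$-direction, which is all that is needed for the conclusion---and avoids checking the full second-order condition. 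One small point in your write-up: rather than ``lifting $Q_n$ through the local inverse'' of $\varphi_T$ (which could a priori land outside $\mathcal{D}$), it is cleaner to pick preimages $\mathbf{x}'_n\in\mathcal{D}$ directly (they exist because $Q_n\in\mathcal{V}_T^{+}=\varphi_T(\mathcal{D})$) and use compactness of $\mathcal{D}$ together with the unique-preimage property of $P_0^{+}$ to force $\mathbf{x}'_n\to\mathbf{x}^{*}$.
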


\begin{proof}
We consider the following sets of points in the border of
$\mathcal{V}_{T}^{+}$,
\begin{eqnarray*}
\mathcal{B}_{x_{5}=1} & := & \varphi _{T}(\mathcal{D}\cap \{x_{5}=1\})
\\
\mathcal{B}_{x_{5}=-1/3} &: = & \varphi _{T}(\mathcal{D}\cap \{x_{5}=-1/3
\}).
\end{eqnarray*}
Given a point $P$, we define $f_{T,P}(\textbf{x})$ as the square of the
distance function from $\varphi _{T}(\textbf{x})$ to $P$, and we consider
the set
\begin{equation*}
W_{P} := \left \{  \varphi _{T}(\textbf{x}) \ \bigg \rvert \; \textbf{x}
\in \, \mathcal{D} \mbox{ and } \partial _{x_{5}} \, f_{T,P} (
\textbf{x}) = 0\ \right \}  \cup \mathcal{B}_{x_{5}=-1/3} .
\end{equation*}
Define also $g(P) := d(P, W_{P}) - d(P,\mathcal{B}_{x_{5}=1})$, which is
continuous as a function of $P$.

By hypothesis, $P_{0}^{+}$ equals $\varphi _{T}(\textbf{x}_{0}^{*})$ where
$\textbf{x}_{0}^{*}=\big (\tilde{x}(k_{0},m_{0}), 1, \tilde{x}(k_{0},m_{0}),
1, 1\big )$. Since $\tilde{x}(k_{0},m_{0})\neq 0$,
$\textbf{x}_{0}^{*}$ is the only preimage of $P_{0}^{+}$ \citep[see the work by][]{casfer2008}. Therefore, $P_{0}^{+}$ lies in
$\mathcal{B}_{x_{5}=1}$ but not in $W_{P_{0}}$ (see Lemma~\ref{lema:dif5_negativa}), so that $g(P_{0}) > 0$. If $P$ is close enough
to $P_{0}$, then the function $g(P)$ is still positive. This implies that
the global minimum $P^{+}$ of $f_{T,P}(\textbf{x})$ lies in
$\mathcal{D}$ (which is still unique if $P$ is close to $P_{0}$) does not
lie in $W_{P}$ and therefore lies in the border
$\mathcal{B}_{x_{5}=1}$. As a consequence, $P_{+}$ lies also in
$\mathcal{V}^{+}_{T'}$ and
$d(P,\mathcal{V}_{T}^{+}) \geq d(P,\mathcal{V}_{T'}^{+})$.
\end{proof}

\begin{rk}%
Note that when $k_{0}=1$, this situation is a special case of the situation
considered in Section~\ref{sec4}; the only difference is that we are now restricting
ourselves to the JC69 model instead of considering the K81. The result
obtained here coincides with the case considered in Proposition~\ref{k3_id_prop} (c), where the closest point lies in the intersection
of the varieties and
$d(P, \mathcal{V}_{T}^{+}) = d(P, \mathcal{V}_{T'}^{+})$.
\end{rk}

\section{Study on simulated data}\label{sec:simulations}

In this section we simulate points close to a given phylogenetic variety
and we compute its distance to the stochastic region of this variety as
well as to the other phylogenetic varieties (distinguishing also the stochastic
region of the varieties). We do this in the setting of long branch attraction
of the previous section and for balanced trees. We cannot do this theoretically
because, even if we have found a local minimum for the long branch attraction
setting (Theorem~\ref{thm:LocMin}), we cannot warranty that it is global
and also because we do not have a formula for the distance when the input
does not lie on the variety. The computations of this section are performed
using Algorithm~\ref{algo}.

\begin{figure}
\includegraphics[scale=0.7]{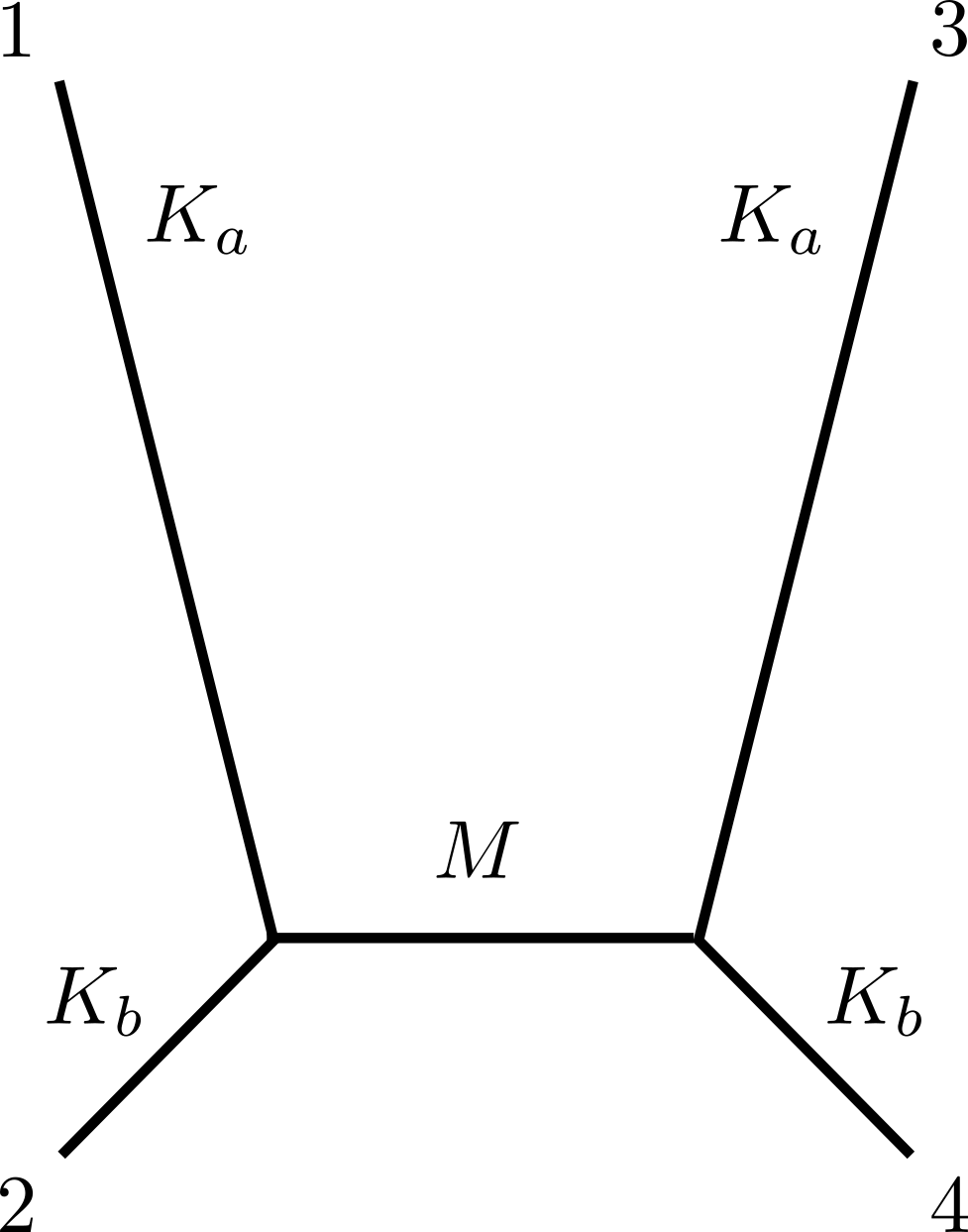}
\caption{Phylogenetic tree with the matrices corresponding to $P = \varphi
(k_{a}, k_{b}, k_{a}, k_{b}, m)$.}
\label{Im:simul_tree}
\end{figure}

We consider a $4$-leaf tree $12|34$ with JC69 matrices. Suppose
$k_{a}$ and $k_{b}$ are the Fourier parameters of matrices at the exterior
edges and $M$ is a JC69 matrix at the interior edge, with eigenvalue
$m$ that takes values in the interval $\left [0.94,1.06\right ]$ (see Fig.~\ref{Im:simul_tree}). These values represent points in
$\mathcal{V}_{12|34}$ that range from the stochastic region of the variety
$\mathcal{V}_{12\mid 34}^{+}$ (that is $m\leq 1$) to the non-stochastic
part ($m>1$). For each set of parameters we considered $100$ data points,
each corresponding to the observation of $10000$ independent samples from
the corresponding multinomial distribution
$\varphi _{T}(k_{a},k_{b},k_{a},k_{b},m)$. As the varieties
$\mathcal{V}_{T}$ all lie in a linear space of dimension $14$ (see the
beginning of Section~\ref{sec6.1}), we first project these data to this linear variety.

For each data point $P$ generated as above and for each tree
$T\in \mathcal{T}$, we have computed the distance of $P$ to the stochastic
region of the variety $\mathcal{V}_{T}^{+}$,
$d(P,\mathcal{V}_{T}^{+})$ using Algorithm~\ref{algo} and we have also
computed the distance to the complete variety,
$ d(P,\mathcal{V}_{T})$. These computations have been performed for the
three tree topologies $12|34$, $13|24$ and $14|23$.

For each set of parameters $k_{a},k_{b}$ and $m$ we have plotted the average
of each of these distances computed from the $100$ data points. In each
graphic we have fixed $k_{a}$ and $k_{b}$ and let $m$ vary in the
$x$-axis from $0.94$ to $1.06$; the $y$-axis represents the distance. The
grey background part of the plots represent the region of data points sampled
from non-stochastic parameters, whereas the white part represents the stochastic
region.

\begin{figure}
\centering
	{Long branch attraction ($k_a = 0.37$, $k_b = 0.87$)}\par\medskip
	\includegraphics[width=14cm]{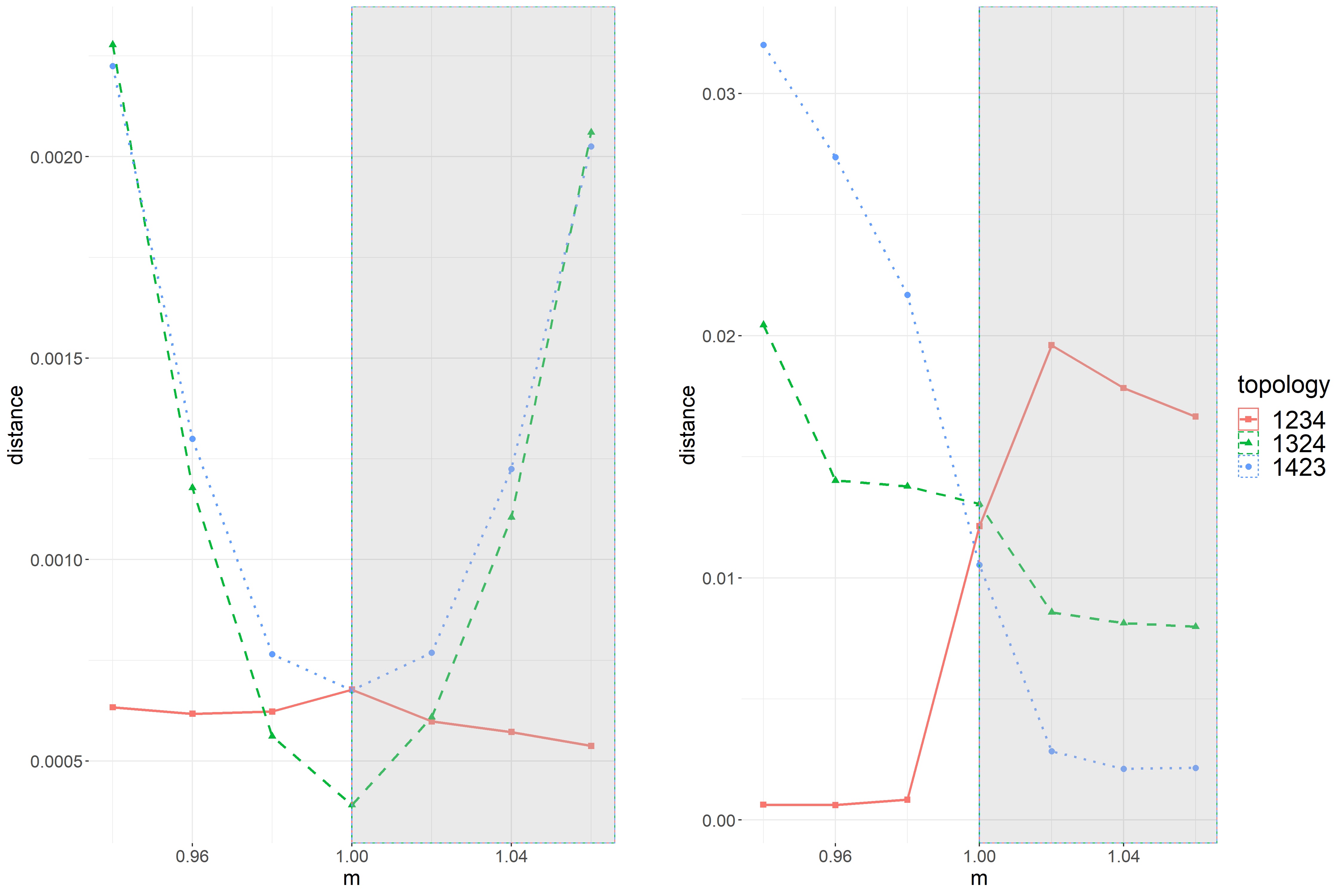}\vspace{0.4cm}
	
	{Balanced trees ($k_a = k_b = 0.51$)}\par\medskip
	\includegraphics[width=14cm]{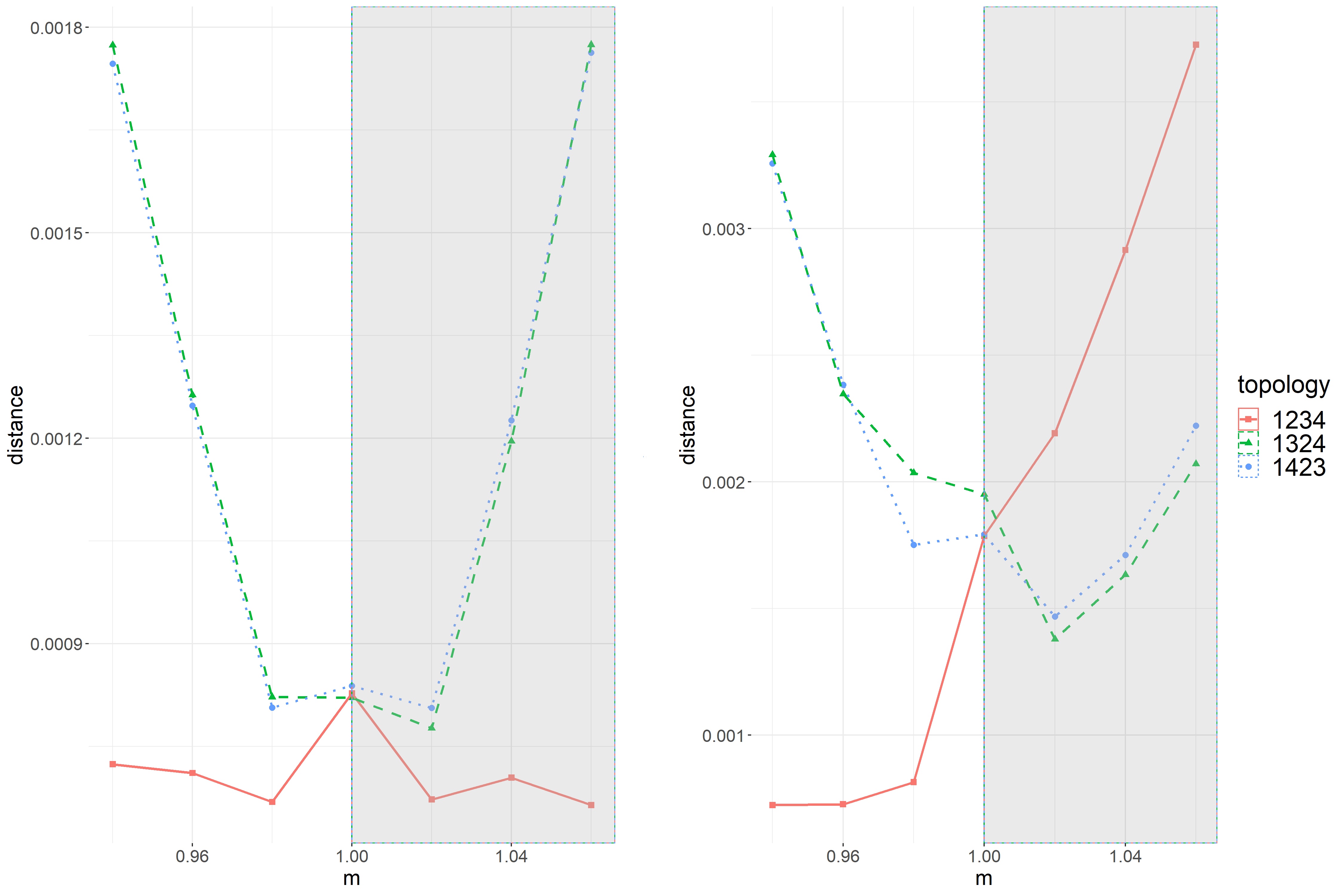}

\caption{These four plots represent the distance of sampled points to the
phylogenetic varieties (on the left) and to their stochastic region (on the
right). In each plot, the horizontal axis represents the eigenvalue $m$ of the
matrix $M$ in the tree of Fig.~\ref{Im:simul_tree}. The two plots on top
correspond to the long branch attraction situation, while the two plots on
bottom correspond to balanced trees. The grey background part indicates the
values of $m$ for which $M$ is not a stochastic matrix.}
\label{Fig:plots_distances}
\end{figure}

The plots on the top of Fig.~\ref{Fig:plots_distances} represent trees
in the long branch attraction (LBA) case (see Fig.~\ref{fig:prop41c}), while those on the bottom represent balanced trees
($k_{a}=k_{b}$); on the left we represent the distance to the phylogenetic
varieties and on the right to the stochastic phylogenetic regions. Concerning
the plots on the left (distance to the phylogenetic varieties), the distance
to $\mathcal{V}_{12|34}$ is always smaller for balanced trees (for all
values of $m$), but this does not hold true in the LBA case (top left figure):
for points close to the intersection of the varieties, that is, $m$ close
$1$, the points are closer to variety corresponding to the tree
$13|23$ (this is the reason why methods based solely on algebraic tools
might perform incorrectly in the LBA case). In both cases (long branch
attraction and balanced trees) we observe a similar behaviour on the plots
on the right (distance to stochastic regions): we note that for
$m\leq 1$ the distance to $\mathcal{V}_{12|34}^{+}$ is almost always the
smallest (except for some points with $m$ very close to 1 in the top figure)
and when $m>1$ the distance to $\mathcal{V}_{12|34}^{+}$ becomes greater
than the distance to the other stochastic regions. This illustrates the
inequality of Theorem~\ref{thm_lba}.

The different performance on the two plots of the distances to
$\mathcal{V}_{13|24}^{+}$ and $\mathcal{V}_{14|23}^{+}$ are due to the
shapes of the trees that we are considering. When the tree is balanced
we see that the distances to $\mathcal{V}_{13|24}^{+}$ and
$\mathcal{V}_{14|23}^{+}$ are almost equal.

Every simulation performed has showed us that, when $m>1$, the closest
point to $P$ in $\mathcal{V}_{12\mid 34}^{+}$, i.e.
$P_{12\mid 34}^{+}$, belongs to the intersection of the varieties, i.e.
$P_{12\mid 34}^{+}\in \mathcal{V}_{12|34}^{+}\cap \mathcal{V}_{13|24}^{+}
\cap \mathcal{V}_{14|23}^{+}$. However, this is not true when we compute
the closest point to $\mathcal{V}_{T'}^{+}$ for $T' \neq 12|34$. In the
case of long branch attraction the closest point
$P_{14|23}^{+}\in \mathcal{V}_{14\mid 23}^{+}$ to $P$ was always the image
of parameters at the interior of $\mathcal{D}$ by $\varphi _{14|23}$ whether
for $T=13 |24$, the parameters describing the closest point to $P$ are
in the interior of $\mathcal{D}$ approximately half of the time.

These simulations verify that, if $P\in \mathbb{R}^{4^{4}}$ is a distribution
satisfying
$d(P,\mathcal{V}_{12|34}) <\min \{d(P,\mathcal{V}_{13|24}),\allowbreak d(P,
\mathcal{V}_{14|23})\}$, it is possible that
$d(P,\mathcal{V}_{12|34}^{+}) > \min \{d(P,\mathcal{V}_{13|24}^{+}),d(P,
\mathcal{V}_{14|23}^{+})\}$. This provides an affirmative answer to the
Question $1$ posed at the beginning of the paper. This suggests that considering
the stochastic part of phylogenetic varieties and the resulting semi-algebraic
constraints needed to describe them may be an interesting strategy for
phylogenetic reconstruction in the long branch attraction setting, and
also for balanced trees. However, as it has become evident throughout this
paper, to deal with both algebraic and semi-algebraic conditions is not
an easy task, and more work is needed in order to design practical methods
for phylogenetic inference under more general evolutionary models than
the models used here.

\subsection{Computations}
\label{sec7.1}

The computations were performed on a machine with $10$ Dual Core Intel(R)
Xeon(R) Silver $64$ Processor $4114$ ($2.20$ GHz, $13.75$ M Cache) equipped
with $256$ GB RAM running Ubuntu $18.04.2$. We have used
\texttt{Macaulay2} version $1.3$ and \texttt{SageMath} version $8.6$.




%
%
%




\section*{Acknowledgements}
The authors would like to thank Piotr Zwiernik for
sharing initial discussion in this topic.
The authors were partially supported by Spanish government Secretar\'{i}a de Estado de Investigaci\'{o}n, Desarrollo e Innovaci\'{o}n [MTM2015-69135-P
(MINECO / FEDER)] and [PID2019-103849GB-I00 (MICINN)]; Generalitat de Catalunya
[2014 SGR-634]. M. Garrote-L\'{o}pez was also funded by Spanish government,
 research project Maria de Maeztu [MDM-2014-0445 (MINECO)]. 
 
\bibliographystyle{plainnat}

\newpage
\appendix
\section{Technical proofs - local minimum}\label{App:TecProofs}

First we recall the notation introduced in Section \ref{sec:lba}. Denote by $\xstar \in \mathbb{R}^5$ the point

\begin{eqnarray*}
	\textbf{x}^*= \left \{ \begin{array}{cl}
		(\tilde{x}(k,m),1,\tilde{x}(k,m),1,1) &  \mbox{ if }\tilde{x}(k,m)<1, \\
		(1,1,1,1,1) & \mbox{otherwise,}
	\end{array}
	\right .
\end{eqnarray*}
where

\begin{equation}\label{eq_tildex}
	\tilde{x}(k,m)=\frac{3k^2\left(3m + 1\right) - 4}{36\gamma(k,m)} + \gamma(k,m),
\end{equation}
$$ \gamma(k,m) = \sqrt[3]{\frac{1}{24}k\left(3m + 1\right) + \frac{1}{216}\sqrt{\alpha(k,m)}},$$ and
\footnotesize
\begin{equation*}
	\alpha(k,m) = -729k^6m^3 - 27k^6 + 108k^4 - 243\left(3k^6 - 4k^4 - 3k^2\right)m^2 - 63k^2 - 27\left(9k^6 - 24k^4 - 2k^2\right)m + 64.
\end{equation*}
\normalsize

Write $\omega =\frac{4}{9} +\frac{11}{27\sqrt[3]{\frac{69+16\sqrt{3}}{243}}} +\sqrt[3]{\frac{69+16\sqrt{3}}{243}}\approx 1.734$ and consider the intervals $I=\left[-\frac{1}{3},1\right]$ and $\Omega = \left(1,\omega\right].$

\subsection{Proof of Proposition \ref{xtilde}}\label{app:Thm62}
In this section we prove the technical results needed to prove Proposition \ref{xtilde}:\\

\noindent\textbf{Proposition \ref{xtilde}.}\textit{
	For $(k,m)\in I\times\Omega$ and $T\in\TT$ the critical point $\tilde{x}(k,m)$ of $f_{T}(x,1,x,1,1)$ is given by the expression \eqref{eq_tildex}. Moreover, $\tilde{x}: I\times \Omega \rightarrow \mathbb{R}$ is a continuous function.
}

\begin{proof}
	Straightforward computations show that $f_{12|34}(x,1,x,1,1)$ = $f_{13|24}(x,1,x,1,1)$ = $f_{14|24}(x,1,x,1,1)$ and that the only real critical point of this function, when $(k,m)\in I\times\Omega$, is the point $$\tilde{x}(k,m)$$ given by expression \eqref{eq_tildex}. In order to prove that $\tilde{x}$ is a continuous real function on $I\times \Omega$, we prove first that $\gamma(k,m)$ is real in Lemma \ref{lema:alphaPos} and then that it does not vanish in Lemma \ref{lema:gamma}.
\end{proof}

\begin{lema}\label{lema:alphaPos}
	$\alpha(k,m)\geq 0$, for all $(k,m)\in I\times\Omega$.
\end{lema}

\begin{proof}
	Consider $\alpha_m(k) := \alpha(k,m)$ as a function of $k$, i.e. suppose m is fixed.
	\footnotesize
	\begin{align*}
		\alpha_m(k) =  \underbrace{\left(-729m^3 -729m^2-243m-27\right)}_{a(m)}k^6 + \underbrace{\left(972m^2+648m+108\right)}_{b(m)}k^4 + \underbrace{\left(729m^2+54m-63\right)}_{c(m)}k^2 + \underbrace{64}_{d}.
	\end{align*}
	\normalsize
	Note that $\alpha(k,m)$ is an even function of $k$ (i.e. $\alpha_m(k) = \alpha_m(-k)$).
	This function has a local minimum at $k = 0$ since
	\begin{equation*}
		\begin{cases}
		\alpha_m(k) = a(m)k^6 + b(m)k^4 + c(m)k^2 + d \mbox{ and } \alpha_m(0) = d = 64 > 0, \\
		\alpha'_m(k) = 6a(m)k^5 + 4b(m)k^3 + 2c(m)k \mbox{ and } \alpha'_m(0) = 0,\\
		\alpha''_m(k) = 30a(m)k^4 + 12b(m)k^2 + 2c(m) \mbox{ and } \alpha''_m(0) = 2\left(729m^2+54m-63\right) > 0 \mbox{ for } m > 1.
		\end{cases}
	\end{equation*}
	
	$\alpha_m(k)$ is an even polynomial of degree $6$ in $k$ with one positive local minimum at $k = 0$. It can be seen that the leading coefficient $a(m)$ is negative for all $m\in \Omega$. It follows that $\alpha_m(k)$ has limit to $-\infty$ when $k$ goes to $\pm\infty$, Thus, its number of real roots will be even and at least two. Suppose $\alpha_m(k)$ has at least $4$ real roots, then the number of local extremes of $\alpha_m(k)$ should be at least seven, but $\alpha'_m(k)$ has degree $5$, and therefore it has at most $5$ roots. Therefore $\alpha_m(k)$ only has $2$ real roots (one positive and one negative) and a (local) minimum at $k = 0$.

	We want to see now that $\alpha_m(k)$ remains positive in $I$ as long as $m\in \Omega$.
	
	Note that for $m=1$, $\alpha_{1}(k) = -1728k^6+1728k^4+720k^2+64$ is zero if and only if $k= \pm\frac{2\sqrt{3}}{3}\approx \pm 1.154\not\in I$.
	On the other hand, the roots of $\alpha_{\omega}(k)$ are $\pm 1$.
	In the following claim we show that, for any $m\in[1,\omega]$, the positive root of $\alpha_m(k)$ is in the interval $\left[1, \frac{2\sqrt{3}}{3}\right]$. By symmetry, the negative root of $\alpha_m(k)$ will be in $\left[-\frac{2\sqrt{3}}{3},-1\right]$ and therefore $\alpha_m(k)$ remains positive in $I$, see Figure \ref{fig:alfamk}.
	
	\begin{figure}[H]
		\centering
		\includegraphics[width=12cm]{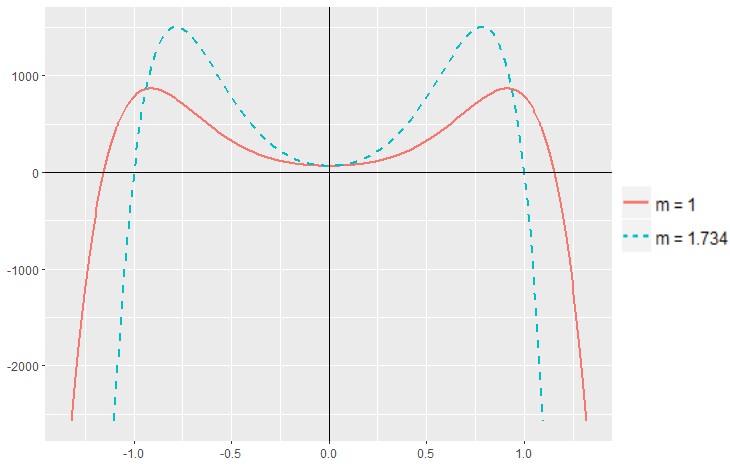}
		\caption{\label{fig:alfamk} The solid line represents $\alpha_{1}(k)$, and the dashed $\alpha_{\omega}(k)$.}
	\end{figure}

	\noindent \emph{Claim.}
	Let $\textbf{k}:[1, \omega] \to \left[1, \frac{2\sqrt{3}}{3}\right]$ be the positive solution of $\alpha_m(k)=0$ (so that $\alpha(\textbf{k}(m), m)=0\ \forall m\in[m_1,m_2]$). Then $\textbf{k}(m)$ is well defined, continuous and strictly decreasing.
	\vspace{2mm}

	\emph{Proof of claim.} As observed above, $\alpha_m(k)$ has exactly one real positive root for any $m>1$. Note that $\textbf{k}(m)$ is continuous by the Implicit Function Theorem. We have seen above that $\textbf{k}(1)=\frac{2 \sqrt{3}}{3}> 1=\textbf{k}(\omega)$. If $\textbf{k}$ was not strictly decreasing, then $\textbf{k}$ would not be injective: there would exist some $m',m''\in[1,\omega]$ such that $\textbf{k}(m') = \textbf{k}(m'')$ and $\alpha(\textbf{k}(m'),m') = \alpha(\textbf{k}(m'),m'') = 0$. %
	In order to reach a contradiction, we show that for any value of $k$, $\alpha(k,m)$  only vanishes for a unique real value of $m$.
	To this aim, consider $\alpha(k,m)$ as a function of $m$,
	\begin{align*}
		\alpha_k(m) =& \underbrace{\left(-729k^6\right)}_{a(k)}m^3 + \underbrace{243\left(-3k^6+4k^4+3k^2\right)}_{b(k)}m^2 + \underbrace{27\left(-9k^6+24k^4+2k^2\right)}_{c(k)}m \\
		& \underbrace{-27k^6+108k^4-63k^2+64}_{d(k)}.
	\end{align*}
	This exhibits $\alpha_k(m)$ as a degree $3$ polynomial in $m$ and it has a unique real root since it has negative discriminant for every $k\neq 0$:
	\begin{align*}
		D(m) &= 18a(k)b(k)c(k)d(k) - 4b(k)^3d(k) + b(k)^2c(k)^2 - 4a(k)c(k)^3 - 27a(k)^2d(k)^2 \nonumber \\
		&= -99179645184(k^6 + 3 k^8) \label{eq:discriminant}
	\end{align*}
	Hence, we conclude that $\textbf{k}(m)$ is well defined and is a strictly decreasing function on $[1,\omega]$.

\end{proof}

\begin{lema}\label{lema:gamma}
	 $\gamma(k,m)\neq 0$, for all $(k,m)\in I\times\Omega$.
\end{lema}	

\begin{proof}
	$\gamma(k,m) = 0 $ if and only if 
	\begin{equation}\label{y0}
		9k\left(3m+1\right) = -\sqrt{\alpha(k,m)}. 
	\end{equation}
	By squaring both members, we derive that $\alpha(k,m) - \big(9k\left(3m+1\right)\big)^2 = 0$. The left member of this expression is equal to $ -\left(9k^2m+3k^2-4\right)^3$, which vanishes if and only if $k = \pm\frac{2}{\sqrt{9m+3}}$. Only the negative solution of $k$ satisfies equation (\ref{y0}). Note that $k = -\frac{2}{\sqrt{9m+3}}$ is always negative and it will be smaller than $-1/3$ if and only if $m < 11/3$.
	
	Therefore, for all $k\in\left[-1/3, 1 \right]$ and $m < \omega < 11/3$, $\gkm$ does not vanish.
\end{proof}

\subsection{Technical results needed for proving Theorem \ref{thm:LocMin}}

In this section we state and prove the results needed to complete the proof of Theorem \ref{thm:LocMin}.\\

To this end, first we need to prove that $\dder{f_{12|34}}{x_2}{\xstar}$ and $\dder{f_{12|34}}{x_4}{\xstar}$ are negative (see Lemmas \ref{lema:dif2_negativa}, \ref{lema:dif4_negativa}) and that $\dder{f_{T}}{x_5}{\xstar}$ is negative for the three topologies $T\in\TT$ (this is done in Lemmas \ref{lema:dif5_negativa}, \ref{lema:dif5_1324_negativa} and \ref{lema:dif5_1423_negativa}). Then we prove that $\xstar$ is a critical point of the function $f_{T}$, for any $T\in\TT$, restricted to the boundary $x_2=x_4=x_5=1$, which is proven in Lemma \ref{lema:minim}. The idea and arguments for the proofs of this section are based on basic concepts and results on \textit{Elimination Theory}. A good general reference for this is Chapter $3$ of \cite{Cox:2007}.

The proofs of these lemmas are divided into two parts. On the first part we assume $\tilde{x}(k,m)<1$ and on the other $\tilde{x}(k,m)$ is assumed to be greater or equal than $1$. For this reason, in the following lemma we start by studying for which parameters $k$ and $m$ one has $\tilde{x}(k,m) \geq 1$.

\begin{lema}\label{lema:m_est}
	It holds that $\tilde{x}(k,m) = 1$ for $(k,m)\in I\times\Omega$ if and only if $m$ is equal to
	\[\mstar :=\frac{-3k^2-k+16}{3k(3k+1)}.\] Moreover, $\tilde{x}(k,m) > 1$ if and only if $m>\mstar$; in this case $k$ is strictly positive.
\end{lema}

\begin{rk}
	It is immediate to check that there are no points $m\in \Omega$ satisfying $\tilde{x}(0,m)=1$ (see Figure \ref{Fig:m_star}). In particular, the condition of the above lemma implies implicitly that the denominator does not vanish.
\end{rk}

\begin{proof}
	Consider new variables $x$, $g$ and $a$ that will allow us to make explicit the algebraic relations of $\tilde{x}(k,m)$, $\gamma(k,m)$ and $\alpha(k,m)$. Then, for $(k,m)\in I \times \Omega$ $\tilde{x}(k,m) = 1$ if and only if $(k,m)$ is a solution of the system of equations:
	
	\begin{equation}\label{eq:defPolys}
		\begin{cases}
		p(x) := x-1 = 0,\\
		p_{\tilde{x}}(x,g,k,m) := 36xg - 36g^2 - 9k^2m - 3k^2 +4 = 0,\\
		p_{\gamma}(g,a,k,m) := 216g^3 - 9k\left(3m + 1\right) - a = 0,\\ 
		p_{\alpha}(a,k,m) := a^2 - \alpha(k,m) = 0.
		\end{cases}
	\end{equation}
	
	Polynomials $p_{\tilde{x}}$, $p_{\gamma}$ and $p_{\alpha}$ stand for the relations introduced in Proposition \ref{xtilde}. Define the ideal $\mathcal{I} := \left(p(x),p_{\tilde{x}}(x,g,k,m), p_{\gamma}(x,g,a,k,m), p_{\alpha}(a,k,m)\right)$ in the polynomial ring $\CC[x,g,a,k,m]$ and compute the elimination ideal $\mathcal{I}\cap\CC[k,m]$. According to Lemma $1$ and Theorem $3$ in section $3.2$ of \cite{Cox:2007}, the variety $\VV(\mathcal{I}\cap\CC[k,m])$ is the smallest algebraic variety containing the possible values $(k,m)$ that correspond to points in $\VV(\mathcal{I})$. However this inclusion is strict and there are points $(k,m)\in\VV(\mathcal{I}\cap\CC[k,m])$ that do not expand to solutions of (\ref{eq:defPolys}).
	
	In this case, the ideal $\mathcal{I}\cap\CC[k,m]$ is generated by the polynomial	
	\begin{equation}\label{eq:j_lemma_mstar}
		\left(9k^2m+3k^2-4\right)^3\left(9k^2m+3k^2+3km+k-16\right).
	\end{equation}
		
	The polynomial vanishes if and only if one of the factors does. The first factor $9k^2m+3k^2-4$ (as a polynomial in $m$) has a root at $m = \frac{4-3k^2}{9k^2}$ and substituting it at (\ref{eq:defPolys}) we get that either
	\begin{equation}\label{eq:noSolsMstar}
	\begin{cases}
		g = 0,\ a = -\frac{12}{k} \mbox{ and } k \neq 0, \mbox{   or}\\
		g = 1,\ a = 108 \mbox{ and }  k = \frac{1}{9}
	\end{cases}
	\end{equation}
	None of these two solutions are satisfied for $k\in I,\ m\in\Omega$. By Lemma \ref{lema:gamma}, $\gamma(k,m)$ is different from zero, then $g$ can not be equal to zero. The second solution in (\ref{eq:noSolsMstar}) implies $m = \frac{107}{3}$, which is not in $\Omega$.
	\\
	The second factor of the polynomial in (\ref{eq:j_lemma_mstar}) vanishes at the points $(k,\mstar)$. By Proposition \ref{xtilde}, $\tilde{x}$ is a continuous real function on $(k,m)$ in $I\times\Omega$. Then to verify when $\tilde{x}(k,m)$ is greater than $1$ it is enough to evaluate it at a point $(k,m)\in I\times\Omega$ such that $m > \mstar$ and at a point $(k,m)\in I\times\Omega$ such that $m < \mstar$. For example, $\tilde{x}(0,3/2) = 0 <1$ and $\tilde{x}(1,3/2) \approx 1.194 > 1$. Therefore, $\tilde{x} > 1$ if and only if $m > m(k)$.
	Straightforward computations show that for any pair $(k,m)\in I\times\Omega$ such that $\tilde{x}(k,m)\geq 1$ it is satisfied that $k>0$ (see Figure \ref{Fig:m_star}).

	\begin{figure}
		\centering
		\includegraphics[width=13cm]{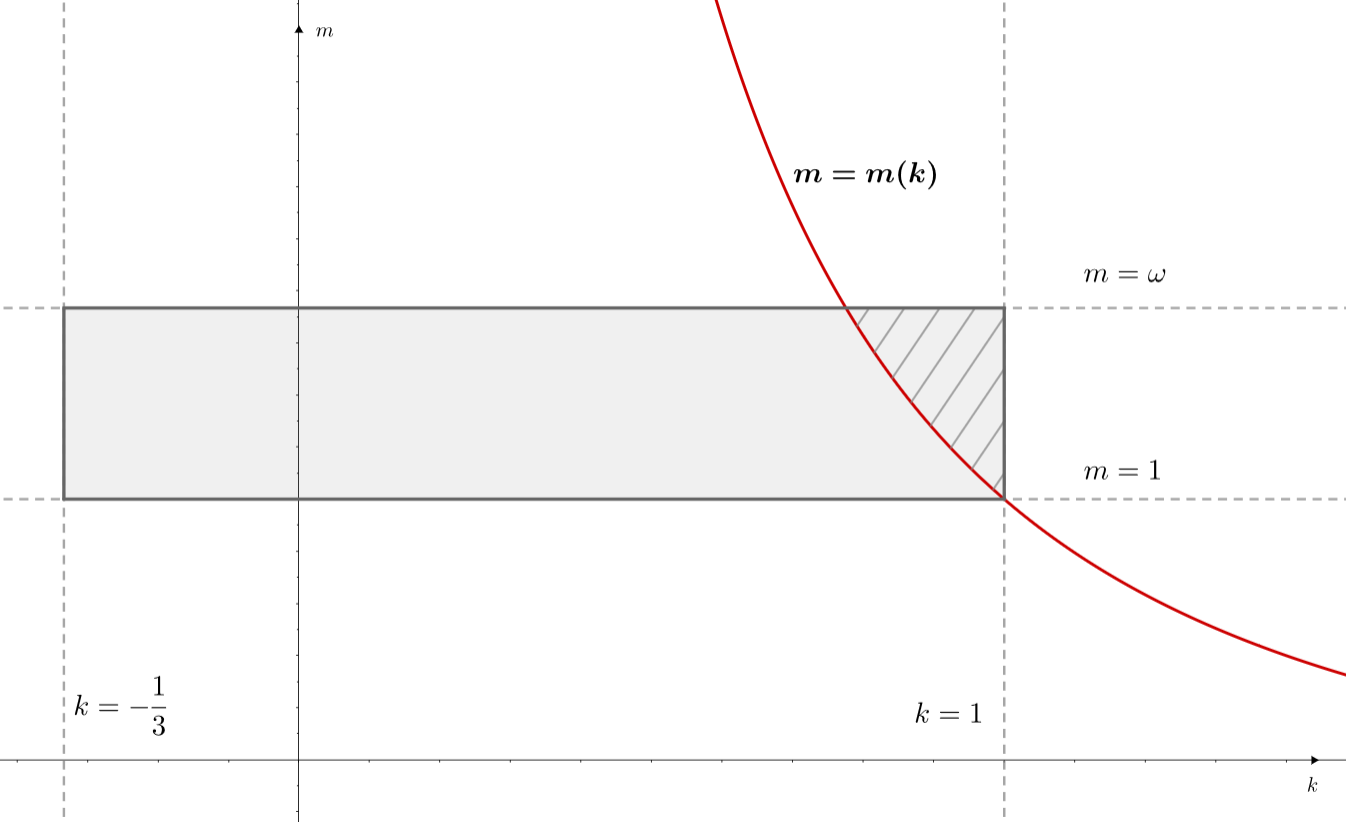}
		\caption{The red curve represents the functions $m = \mstar$ and the grey region is $I\times\Omega$. Therefore the stripped region contains the values $(k,m)\in I\times\Omega$ such that $\tilde{x}(k,m) \ge 1$.
		\label{Fig:m_star}}
	\end{figure}

\end{proof}

The aim of the following three lemmas is to prove that $\dder{f_{12|34}}{x_i}{\xstar}<0$ for $i=2,4,5$. In every lemma, the idea of the proof is the same. We consider an ideal $\mathcal{I}$ for which the contraction in $\mathbb{C}[k,m]$ is the set of points $(k,m)$ such that $\dder{f_{12|34}}{x_i}{\xstar}=0$.
\newpage
\begin{lema}\label{lema:dif2_negativa}
	$\dder{f_{12|34}}{x_2}{\xstar}<0$ for all $(k,m)\in I\times\Omega$. 
\end{lema}

\begin{proof}
	Given $(k,m)\in I\times \Omega$, write $\tilde{x}$ for $\tilde{x}(k,m)$. The proof falls naturally into two cases.
	
	\emph{1st case.} Suppose $\tilde{x}<1$. By definition, $\xstar=(\tilde{x},1,\tilde{x},1,1)$ in this case. Therefore, $\dder{f_{12|34}}{x_2}{(x,1,x,1,1)}$ is given by the polynomial:
	$$p(x,k,m) =  54x^4 -18\left(2k^2m  + k^2 -2\right)x^2 - 6\left(5km + k\right){x}- 6m + 6,\\$$
	To prove that this function is negative we prove that it never vanishes on $I\times\Omega$ and is negative for a particular value in that region. $\dder{f_{12|34}}{x_2}{\xstar}$ is zero if and only if the following polynomials vanish:
		\begin{equation}\label{eq:dif_x2}
		p(x,k,m), \mbox{  }	p_{\tilde{x}}(x,g,k,m),\mbox{  } p_{\gamma}(g,a,k,m), \mbox{ and } p_{\alpha}(a,k,m).
	\end{equation}
	where $p_{\tilde{x}}(x,g,k,m)$, $p_{\gamma}(g,a,k,m)$ and $p_{\alpha}(a,k,m)$ are defined as in (\ref{eq:defPolys}).
	
	We consider the ideal $\mathcal{I} =\left( p(k,m,x), p_{\tilde{x}}(k,m,x,g), p_{\gamma}(k,m,x,g,a), p_{\alpha}(k,m,a)\right)$ and we compute the elimination ideal $\mathcal{I}\cap\CC[k,m]$ which turns out to be generated by exactly one polynomial:
	
	\begin{equation}\label{eq:j_diff2}
		\left(m-1\right)\left(3k^2+1\right)\left(9k^2m+3k^2-4\right)^3h(k,m)
	\end{equation}
	where
\begin{align*}
		h\left(k,m\right) =&\ 81k^6m^3-27k^6m^2-45k^6m-9k^6+39k^4m^3+547k^4m^2,\nonumber \\
		& +469k^4m+97k^4-1312k^2m^2-1120k^2m-256k^2-768m^2.\nonumber
	\end{align*}

	The polynomial in (\ref{eq:j_diff2}) is zero if and only if at least one of its factors vanishes. The first factor is zero when $m=1$, but $1\not\in\Omega$. The second one has no real solutions in $k$. Note that $9k^2m+3k^2-4$ is zero when $k=\pm\frac{2}{\sqrt{9m+3}}$. However, the negative solution does not belong to $I$ if $m\in\Omega$ (see the proof of Lemma \ref{lema:gamma}) and the positive one does not generate a solution of (\ref{eq:dif_x2}). The case of $h(k,m)$ is not that simple. Consider $h$ as a polynomial in $m$:

	\begin{align*}
		h_{k}\left(m\right) =&  \underbrace{\left(81k^6 + 39k^4\right)}_{a(k)}m^3 + \underbrace{\left(-27k^6 + 547k^4 - 1312k^2 - 768\right)}_{b(k)}m^2 +\\ &\underbrace{\left(-45k^6 + 469k^4 - 1120k^2\right)}_{c(k)}m +
		\underbrace{\left(-9k^6 + 97k^4 - 256k^2\right)}_{d(k)}.
	\end{align*}
	The discriminant of $h_{k}\left(m\right)$ is
	\begin{align*}
		D(k) = -49152 k^2 (\sqrt{6} - k) (\sqrt{6} + k) (384 - 106 k^2 + 39 k^4) (64 + 115 k^2 - 38 k^4 + 3 k^6)^2. \\
	\end{align*}
	The discriminant $D(k)$ has three real roots at $k=0$ and $k=\pm\sqrt{6}$ with $\sqrt{6}\sim 2.449$. Since $D(-1) = D(1) < 0$ we conclude $D(k) \leq 0 \ \forall k\in I$ and hence $h_{k}(m)$ only has one real root in this interval. Since the leading coefficient of $h_{k}$ is positive and $h_{k}(2) = 441k^6 + 3535k^4 - 7744k^2 - 3072<0$ we conclude that the root of $h_{k}(m)$ is greater than $2$ and therefore does not belong to $\Omega$.
	
	Consequently there are no points in $\VV(\mathcal{I}\cap\CC[k,m])$ in the region $I\times \Omega$. Since $f_{12|34}$ is continuous and well defined in $I\times\Omega$ it may be concluded that $f_{12|34}$ has the same sign in all the domain. Evaluating at any point $(k,m)\in I\times\Omega$ we conclude that $\dder{f_{12|34}}{x_2}{\xstar}$ is negative on this region.

	\vspace{2mm}
	\emph{2nd case.} Suppose that $\tilde{x} \geq 1$. We already know that in this case, $m\geq \mstar$, which implies that $k>0$ (see figure \ref{Fig:m_star}). On the other hand, we have $\dder{f_{12|34}}{x_2}{\textbf{1}} = -18k^2 - 6(6k^2 + 5k + 1)m - 6k + 96$ is negative if and only if $m > \frac{-3k^2-k+16}{6k^2+5k+1}$. Now, it is straightforward to check that for positive $k$, $\mstar > \frac{-3k^2-k+16}{6k^2+5k+1}$.

\end{proof}

\begin{lema}\label{lema:dif4_negativa}
	$\dder{f_{12|34}}{x_4}{\xstar}<0$ for all $(k,m)\in I\times\Omega$.
\end{lema}

\begin{proof}
	 Computing the partial derivative and substituting we get $\dder{f_{12|34}}{x_4}{\xstar} = \dder{f_{12|34}}{x_2}{\xstar}$. This follows from the symmetry on $f_{12|34}$ and on $\xstar$. Therefore, Lemma \ref{lema:dif4_negativa} is a consequence of Lemma \ref{lema:dif2_negativa}.
\end{proof}

\begin{lema}\label{lema:dif5_negativa}
	$\dder{f_{12|34}}{x_5}{\xstar}<0$ for all $(k,m)\in I\times\Omega$. 
\end{lema}

\begin{proof} 
	We split the proof into two cases

	\emph{1st case.} Suppose $\tilde{x}<1$:		
	\begin{align*}
		\dder{f_{12|34}}{x_5}{\xstar} = 54\tilde{x}^4 - 18(3k^2m - 2)\tilde{x}^2 - 36km\tilde{x} - 6m + 6.
	\end{align*}
	In this case consider the ideal $\mathcal{I} = \left( p(x,k,m),p_{\tilde{x}}(x,g,k,m), p_{\gamma}(x,g,a,k,m), p_{\alpha}(a,k,m) \right)$ where $p(x,k,m) = 54{x}^4 - 18(3k^2m - 2){x}^2 - 36km{x} - 6m + 6$. The ideal $\mathcal{I}\cap\CC[k,m]$ is generated by the polynomial,
	\begin{equation*}\label{eq:j_diff4}
		\left(m-1\right)\left(3k^2+1\right)\left(9k^2m+3k^2-4\right)^3h(k,m)
	\end{equation*}
	where $h(k,m) = 81k^4m^3 - (27k^4+288k^2+256)m^2 - (45k^4+96k^2)m - 9k^4$. We only need to study the intersection of $h(k,m)$ with $I\times\Omega$ since the other factors have already been studied in the proof of Lemma \ref{lema:dif2_negativa}. Taking $h(k,m)$ as a function of $m$ we compute its discriminant,
	\begin{equation*}
		D(k) = -442368 k^6 (2 + 3 k^2) (128 + 18 k^2 + 27 k^4)
	\end{equation*}
	which has only one real root at $k=0$. Substituting at $k=\pm1$ we get $D(-1) = D(1) = -382648320 < 0$. Therefore $D(k) \leq 0 \ \forall k\in I$ and $h_{k}(m)$ has exactly one real root. If $k\in I$ this root is not in $\Omega$ since $h(k,1)=-384k^2 - 256 <0 \ \forall k$, and $h(k,2)=441k^4 - 1344k^2 - 1024 <0 \ \forall k\in I.$
	Same argument as before is valid to conclude $\dder{f_{12|34}}{x_5}{\xstar}$ is negative in our domain.
	\vspace{2mm}

	\emph{2nd case.} Suppose $\tilde{x} \geq 1$: %
	The function $\dder{f_{12|34}}{x_5}{\textbf{1}} =  -6(9k^2 + 6k + 1)m + 96$ is negative if and only if $m > \frac{16}{9k^2+6k+1}$. The value $\mstar$ defined in Lemma \ref{lema:m_est} is greater than $\frac{16}{9k^2+6k+1}$ for all $k\in\left[0, 1\right]$. Since $k>0$ when $\tilde{x}(k,m) > 1$, $\dder{f_{12|34}}{x_2}{\textbf{1}}$ is negative for all $k\in I, m\in\Omega$ such that $m>\mstar$.
\end{proof}

\begin{lema}\label{lema:dif5_1324_negativa}
	$\dder{f_{13|24}}{x_5}{\xstar}\leq 0$ for all $(k,m)\in I\times\Omega$.
\end{lema}

\begin{proof} We split the proof into two cases.
	\emph{1st case.} Assume $\tilde{x}<1$, then:
	\begin{equation*}
		\dder{f_{13|24}}{x_5}{\xstar} = 48\tilde{x}^4 - 12(3k^2m + k^2 - 4)\tilde{x}^2 - 12(3km+k)\tilde{x}.
	\end{equation*}
	Write $p(x,k,m)$ for this polynomial and $\mathcal{I}=\left(p(x,k,m),p_{\tilde{x}}(x,g,k,m), p_{\gamma}(x,g,a,k,m), p_{\alpha}(a,k,m)\right)$. In this case the contraction ideal $\mathcal{I}\cap\CC[k,m]$ is generated by the polynomial
		\[k^4(m-1)(3m+1)^3(9k^2m+3k^2-4)^3\]
	which vanishes if and only if $m =1$, $m=-1/3$, $k = 0$ or $m=\frac{4-3k^2}{9k^2}$. The two first possible values of $m$ do not belong to $\Omega$. If $m=\frac{4-3k^2}{9k^2}$, then $	\dder{f_{13|24}}{x_5}{\xstar}$ vanishes if and only if $k=1/\sqrt{3}$, but then $m=1$, which is not in $\Omega$. It only remains to study the case $k = 0$.
	 Evaluating $\dder{f_{13|24}}{x_5}{\xstar}$ at $k=1$ and $k=-1$, we check that it takes a negative value. Finally, the case $k=0$ implies that $\tilde{x}=0$, which gives $\dder{f_{13|24}}{x_5}{\xstar}=0$.

	\vspace{2mm}
	\emph{2nd case.} Suppose $\tilde{x} \geq 1$:
	The value of $\dder{f_{13|24}}{x_5}{\textbf{1}} = -6(9k^2 + 6k + 1)m + 96$ is negative if and only if $m > \frac{16}{9k^2-6k+1}$. Since the value $\mstar$ obtained in Lemma \ref{lema:m_est} is greater than $\frac{16}{9k^2-6k+1}$ for all $k\in\left[0, 1\right]$ the claim follows.	
\end{proof}

\begin{lema}\label{lema:dif5_1423_negativa}
	$\dder{f_{14|23}}{x_5}{\xstar}\leq 0$ for all $(k,m)\in I\times\Omega$.
\end{lema}

\begin{proof} We split the proof into two cases.\\
	\emph{1st case.} Assume $\tilde{x}<1$, then:
	\begin{equation*}
		\dder{f_{14|23}}{x_5}{\xstar} = 54\tilde{x}^4 - 6(7k^2m + 2k^2 - 6)\tilde{x}^2 - 12(2km+k)\tilde{x} -6m+6
	\end{equation*}
	and write $p(x,k,m)$ for this polynomial. Let $\mathcal{I}:=\left(p(x,k,m),p_{\tilde{x}}(x,g,k,m), p_{\gamma}(x,g,a,k,m), p_{\alpha}(a,k,m)\right)$, then the contraction ideal $\mathcal{I}\cap\CC[k,m]$ is generated by the polynomial
	\begin{equation}\label{eq:h_f14_5}
		(m-1)(9k^2m+3k^2-4)^3h(k,m)
	\end{equation}
		where $h(k,m)= a(m)k^8 + b(m)k^6 + c(m)k^4 + d(m)k^2 + e(m)$ and
	\begin{align*}
		a(m) &= 36m^4 - 129m^3 + 19m^2 + 61m + 13,\\
		b(m) &= -942m^3 + 2362m^2 + 1750m + 286,\\
		c(m) &= -2097m^3 + 7003m^2 + 3853m + 457,\\
		d(m) &= 672m^2 + 8928m + 3072,\\
		e(m) &= 2304m^2
	\end{align*}
	
	The polynomial in (\ref{eq:h_f14_5}) vanishes if $m=1\not\in\Omega$, $k=\pm \frac{2}{\sqrt{9m+3}}$ or $h(k,m)$ is zero. However, recall that $k= -\frac{2}{\sqrt{9m+3}}$ does not belong to $I$ if $m\in \Omega$ (see the proof of Lemma \ref{lema:gamma}) and evaluating $\dder{f_{14|23}}{x_5}{\xstar}$ at $k= \frac{2}{\sqrt{9m+3}}$ one can check that it vanishes if and only if $m=1$ which is not in $\Omega$.

	It remains to see if $h(k,m)$ vanishes for any values $(k,m)\in I\times\Omega$. Straightforward computations show that the roots of the polynomials $a(m)$, $b(m)$, $c(m)$, $d(m)$ and $e(m)$ do not lie in $\Omega$. By evaluating these polynomials at particular values of $\Omega$, it is immediate to check that a(m) is negative, while the other polynomials are positive. Thus, by the Descartes rule $h_m(k)$ (i.e. $h(k,m)$ considered as a function of $k$) has only one positive real root. Since it is an even plynomial on $k$ is has also one real negative root.
	We claim that the positive root of $h_m(k)$ is greater than $1$ for any $m\in\Omega$: observe that $h_m(0) = 2304m^2$ is always positive. Moreover, it is easy to check that the polynomial $h_m(1)= 36m^4 - 3168m^3 + 12360m^2 + 14592m + 3828$ is always positive for any $m\in\Omega$. Then, since $h_m(k)$ has only one positive root and $h_m(0),\ h_m(1) > 0$, the roots of $h_m(k)$ do not lie in $I$ for $m\in\Omega$.
	Evaluating $\dder{f_{13|24}}{x_5}{\xstar}$ at any point in $I\in\Omega$, we check that it takes a negative value.

	\vspace{2mm}
	\emph{2nd case.} Suppose $\tilde{x} \geq 1$:
	The {value of } $\dder{f_{14|23}}{x_5}{\textbf{1}} = -6(7k^2 + 4k + 1)m - 12(k^2+k-8)$ is negative if and only if $m > -\frac{2(k^2+k-8)}{7k^2+4k+1}$. Moreover, $-\frac{2(k^2+k-8)}{7k^2+4k+1} < \mstar$ (see Lemma \ref{lema:m_est} for a definition of $m(k)$) for all $k\in\left[0, 1\right]$. Then, the statement follows.
	
\end{proof}

\begin{lema}\label{lema:minim}
	For any  quartet tree topology $T\in \mathcal{T}$ consider the function  $g:I\times I \longrightarrow \mathbb{R}$ defined as $g(x,y)=f_{T}(x, 1, y, 1, 1)$. Then, the point 
	\begin{eqnarray*}
		\uu:= \left \{ 
		\begin{array}{cl}
			\left(\tilde{x}(k,m),\tilde{x}(k,m)\right) &  \mbox{ if }\tilde{x}(k,m)<1; \\
			(1,1) & \mbox{otherwise;}
		\end{array} \right .
	\end{eqnarray*}
	is a local minimum of $g$.
\end{lema}

\begin{proof}
	Straightforward computations show that $f_{12|34}(x, 1, y, 1, 1) = f_{13|24}(x, 1, y, 1, 1) =  f_{14|23}(x, 1, y, 1, 1) = g(x,y)$. 	Therefore the following proof is valid for any (trivalent) tree topology with 4 leaves. To prove that $\uu$ is a local minimum of $g(x,y)$ we consider two  cases. We first assume that  $\tilde{x}<1$ and we will prove that $\uu$ is a \emph{local} minimum of $g$. The second case is when  $\tilde{x} \geq 1$ so that $\uu$ is on the boundary of $I\times I$. By the KKT conditions we prove that $\nabla g(1,1)$ is negative.
	We write $\tilde{x}$ for $\tilde{x}(k,m)$.
	
	\vspace{2mm}
	\emph{1st case.} Assume $\tilde{x}<1$.
	The first derivatives of $g(x, y)$ vanish at $\uu$. The Hessian matrix of $g$ evaluated at a point $(x,x)$ is
	\begin{equation*}\label{eq:proj}
	\footnotesize
		\Hess = \left(\begin{array}{cc}
 		72 {x}^2 + 24  & -54k^2 m - 18k^2 + 144 {x}^2 \\
 		-54k^2 m - 18 k^2 + 144 {x}^2 & 72 {x}^2 + 24 \\
		\end{array}\right).
	\end{equation*}
 	To show that $\Hess$ is a positive definite matrix, we see that all its principal minors are positive for all $(k,m)\in I\times\Omega$. The first one is clearly positive since it is the sum of positive numbers.
 	To prove that the determinant of $\Hess$ is also positive we will follow the same ideas of the previous lemmas.
	 	
 	Consider the ideal $\mathcal{I} = \big(\det(\Hess),p_{\tilde{x}}(x,g,k,m), p_{\gamma}(x,g,a,k,m), p_{\alpha}(a,k,m)\big)$ where $$\det(\Hess) = -324(3k^2m + k^2 - 8x^2)^2 + 576(3{x}^2 + 1)^2.$$
	 	
 	\noindent The elimination ideal $\mathcal{I}\cap\CC[k,m]$ is generated by the polynomial
 	\begin{equation}\label{eq:j_det}
 		\left(9k^2m + 3k^2 - 4\right)^3\left(27k^2m^2-126k^2m-45k^2-64\right)h(k,m)
 	\end{equation}
	 where
		\begin{align*}
		h(k,m) = &\ 729k^6m^3+729k^6m^2+243k^6m+27k^6-972k^4m^2-648k^4m-108k^4-729k^2m^2 \\
	 	& -54k^2m+63k^2-64.
	\end{align*}
	
	We are interested in the real zeros of each factor of (\ref{eq:j_det}). As in the previous lemmas, it is straightforward to check that the points of the form $(k,\frac{-3k^2 + 4}{9k^2})$ that lie on the domain $I\times \Omega$ do not extend to solutions of the original ideal $\mathcal{I}$; more precisely, $\det(\Hess)$ does not vanish over these points. For any value $k\in I$, the second factor only vanishes at $m = \frac{21k\pm8\sqrt{9k^2+3}}{9k}$, which does not belong to $\Omega$.
	Indeed, if we denote $m^+(k)=\frac{21k + 8\sqrt{9k^2+3}}{9k}$ and $m^-(k)=\frac{21k - 8\sqrt{9k^2+3}}{9k}$, then we want to prove that the image of these functions does not meet $\Omega$. Note that $m^+(k)$ is a decreasing function since its derivative $\dder{m^+}{k}{k} = \frac{-8}{3k^2\sqrt{9k^2+3}}$ is negative for all $k\neq 0$. Moreover,
	\begin{equation*}
		\begin{array}{ll}
			\lim_{k \to -\infty}m^+(k) = \frac{-1}{3}, & \lim_{k \to +\infty}m^+(k) = 5, \\
			\lim_{k \to 0^-}m^+(k) = -\infty, & \lim_{k \to 0^+}m^+(k) = +\infty. \\
		\end{array}
	\end{equation*}
	Hence, $Im(m^+(k)) \cap \Omega$ is empty. The function $m^-(k)$ is increasing since its derivative $\dder{m^-}{k}{k} = \frac{8}{3k^2\sqrt{9k^2+3}}$ is positive for all $k\neq 0$. The limits of this function are
	\begin{equation*}
		\begin{array}{ll}
		\lim_{k \to -\infty}m^-(k) = 5, & \lim_{k \to +\infty}m^-(k) = \frac{-1}{3}, \\
		\lim_{k \to 0^-}m^-(k) = +\infty, & \lim_{k \to 0^+}m^-(k) = -\infty \\
		\end{array}
	\end{equation*} 
	and therefore the image of $m^-(k)$ neither intersects with $\Omega$.
	
	Consider $h(k,m)$ as a function of $m$. As its discriminant $D(k) = -297538935552k^8 - 99179645184k^6$ is negative for all $k\neq 0$, then the polynomial $h_{k}(m)$ has at most one real root $\forall k$.
	Moreover, $h_{k}(1)\leq 0$ and $h_{k}(\omega)\leq 0$ for all $k$ and hence $h_{k}$ is smaller or equal than zero for all $m\in\Omega$.
	Therefore it can be deduced that $\det(\Hess)$ has constant sign in the region $I\times\Omega$. Substituting at a particular point on that region we check that $\det(\Hess) > 0$ for all $(k,m)\in I\times\Omega$.	\\
	 	
	\emph{2nd case.} Assume $\tilde{x} \geq 1$. In this case, since we are in the boundary of the domain, we need to prove that $\nabla g(1,1)<0$.
	 The gradient $$\nabla g(1,1) = (-54k^2m - 18k^2 - 18km - 6k + 96, -54k^2m - 18k^2 - 18km - 6k + 96)$$ is zero if and only if $m = \mstar$. Moreover for $m\geq \mstar$ or equivalently for $\tilde{x}\geq 1$ the polynomial $-54k^2m - 18k^2 - 18km - 6k + 96$ is negative.
\end{proof}

\begin{cor}\label{cor:negx1x3}
	$\dder{f_{T}}{x_1}{\xstar}$ and $\dder{f_{T}}{x_3}{\xstar}$ are less than or equal to zero for any $T$.
\end{cor}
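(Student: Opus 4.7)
The plan is to reduce the computation of $\dder{f_T}{x_1}{\xstar}$ and $\dder{f_T}{x_3}{\xstar}$ to information about the restricted function $g(x,y) := f_T(x,1,y,1,1)$ studied in Lemma~\ref{lema:minim}. The first step is to observe that $\xstar$ has its second, fourth and fifth coordinates equal to $1$ and its first and third coordinates equal respectively to the two coordinates of the point $\uu$ of Lemma~\ref{lema:minim}. By the chain rule, this gives the identifications
\begin{equation*}
\dder{f_T}{x_1}{\xstar} = \partial_x g(\uu), \qquad \dder{f_T}{x_3}{\xstar} = \partial_y g(\uu),
\end{equation*}
for every $T\in\TT$, since $g(x,y)$ is independent of the tree topology (as already noted at the beginning of the proof of Lemma~\ref{lema:minim}).

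Next, I would split the argument into the two cases built into the definition~\eqref{xstar} of $\xstar$. If $\tilde{x}(k,m) < 1$, then $\uu = (\tilde{x}(k,m), \tilde{x}(k,m))$ lies in the interior of $I\times I$, and Lemma~\ref{lema:minim} shows that it is a local minimum of $g$; in particular $\nabla g(\uu) = 0$, so both partial derivatives above vanish. If instead $\tilde{x}(k,m) \geq 1$, then $\uu = (1,1)$ lies on the boundary and the second case of the proof of Lemma~\ref{lema:minim} has already evaluated the gradient, obtaining $\nabla g(1,1) = (q,q)$ with $q = -54k^2 m - 18k^2 - 18km - 6k + 96$, and has shown that $q \leq 0$ under the condition $m \geq \mstar$, which is precisely equivalent to $\tilde{x}(k,m) \geq 1$ by Lemma~\ref{lema:m_est}. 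In either case the two partial derivatives are non-positive, which is the claim.

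I do not expect any real obstacle here: the corollary is essentially a bookkeeping consequence of Lemma~\ref{lema:minim}, and all the non-trivial sign analysis (positive-definiteness of the Hessian on the interior, and the sign of $\nabla g(1,1)$ on the boundary) has already been carried out there. The only mild point to check is that the chain-rule identification of partial derivatives is valid uniformly in $T$, which is immediate from the fact that $f_T(x,1,y,1,1)$ agrees for the three topologies in $\TT$.
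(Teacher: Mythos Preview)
Your proposal is correct and follows essentially the same route as the paper: identify $\partial_{x_1} f_T(\xstar)$ and $\partial_{x_3} f_T(\xstar)$ with $\partial_x g(\uu)$ and $\partial_y g(\uu)$, then read off the signs from the two cases of Lemma~\ref{lema:minim}. One small remark: in the case $\tilde{x}(k,m)<1$ you deduce $\nabla g(\uu)=0$ from ``interior local minimum'', but the proof of Lemma~\ref{lema:minim} actually records the vanishing of the first derivatives of $g$ at $\uu$ directly (this is how the critical point is exhibited before the Hessian is checked), so you can cite that fact outright and avoid having to argue that $\uu$ lies in the interior of $I\times I$.
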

\begin{proof}
	For any $T$, $\dder{f_{T}}{x_1}{\xstar} = \dder{g}{x}{\uu}$ and $\dder{f_{T}}{x_3}{\xstar} = \dder{g}{y}{\uu}$, where $g(x,y)$ and $\uu$ are defined as in the previous lemma. Therefore, as shown in Lemma \ref{lema:minim} the partials $\dder{f_{T}}{x_1}{\xstar}$ and $\dder{f_{T}}{x_3}{\xstar}$ are zero if $\tilde{x}(k,m)<1$ and negative if $\tilde{x}(k,m)\geq1$.
\end{proof}

\end{document}